\documentclass[11pt,a4paper,oneside, final]{amsart}

\usepackage{newpxtext}
\usepackage{newpxmath}
\renewcommand{\mathbb}[1]{\varmathbb{#1}}

\AtBeginDocument{%
  }
    
\newcommand{\hl}[1]{#1}

\usepackage{amsmath, amsthm, mathrsfs, calc, bm}
\usepackage{mathtools}
\usepackage{url}
\usepackage{mleftright}
\usepackage{enumitem}
\usepackage{microtype}
\usepackage{aliascnt}
\usepackage{hyperref}
\usepackage{pgf}
\usepackage{pgfplots}
\usepackage{tikz}
\usepackage{booktabs}
\usetikzlibrary{calc,decorations.markings,positioning,decorations.pathreplacing,calligraphy,shapes,arrows}
\usepgfplotslibrary{polar}\usepgflibrary{shapes.geometric}\usepgflibrary{plotmarks}

\usepackage[T1]{fontenc}
\usepackage[utf8]{inputenc} \DeclareUnicodeCharacter{0301}{í}

\usepackage{algorithm}
\usepackage{algorithmic}

\usepackage[capitalise,nameinlink]{cleveref}
\usepackage{caption}
\usepackage[labelfont=bf,textfont=normalfont,singlelinecheck=off,justification=raggedright]{subcaption}

\usepackage{threeparttable}
\usepackage{multirow}

\newcommand{\appref}[1]{\hyperref[#1]{Appendix~\ref*{#1}}}

\newtheorem{theorem}{Theorem}[section]

\newaliascnt{lem}{theorem}
\newtheorem{lemma}[lem]{Lemma}
\aliascntresetthe{lem}

\newaliascnt{cor}{theorem}
\newtheorem{corollary}[cor]{Corollary}
\aliascntresetthe{cor}

\newaliascnt{prop}{theorem}
\newtheorem{proposition}[prop]{Proposition}
\aliascntresetthe{prop}

\newaliascnt{claim}{theorem}

\aliascntresetthe{claim}

\newaliascnt{conjecture}{theorem}

\aliascntresetthe{conjecture}

\theoremstyle{definition}

\newaliascnt{example}{theorem}
\newtheorem{example}[example]{Example}
\aliascntresetthe{example}

\newaliascnt{defin}{theorem}

\aliascntresetthe{defin}

\newaliascnt{prob}{theorem}

\aliascntresetthe{prob}

\newaliascnt{rem}{theorem}
\newtheorem{remark}[rem]{Remark}
\aliascntresetthe{rem}

\newaliascnt{question}{theorem}

\aliascntresetthe{question}

\Crefname{question}{Question}{Questions}

\renewcommand{\epsilon}{\ensuremath\varepsilon}

\renewcommand{\Lambda}{\ensuremath\mu}

\newcommand{\Z}{\mathbb{Z}}
\newcommand{\N}{\mathbb{N}}

\newcommand{\Q}{\mathbb{Q}}
\newcommand{\K}{\mathbb{K}}

\newcommand{\NN}{\mathbb{N}}
\newcommand{\ZZ}{\mathbb{Z}}
\newcommand{\QQ}{\mathbb{Q}}

\newcommand{\KK}{\mathbb{K}}

\newcommand{\OO}{\mathcal{O}}

\newcommand{\diag}[1]{\ensuremath{\operatorname{diag}(#1)}}
\DeclareMathOperator{\GL}{GL}
\newcommand{\ideal}[1]{\left\langle #1 \right\rangle}

\providecommand*{\eu}%
{\ensuremath{\mathrm{e}}}
\providecommand*{\iu}%
{\ensuremath{\mathrm{i}}}

\newcommand{\seq}[3][{}]{\langle #2 \rangle_{#3}^{#1}}

\newcommand{\loopie}{\mathcal{L}}

\newcommand{\alg}{\overline{\mathbb{Q}}}

\newcommand{\Lexp}[1]{L_{\operatorname{exp}}(#1)}
\newcommand{\Col}[1]{#1 : \left(\prod_{i=1}^d x_i\right)^\infty} %
\DeclareMathOperator{\Sat}{Sat}
\DeclareMathOperator{\Row}{Row}
\DeclareMathOperator{\Null}{Null}

\usepackage[backend=biber,
			natbib=true, %
			style=numeric,
            url=false,
            doi=true,
            isbn=false,
            giveninits=true,
            maxbibnames=99]{biblatex}

\addbibresource{idealbib.bib}

\date{}

\usepackage[foot]{amsaddr}

\author{George Kenison$^1$}

\email{george.kenison@tuwien.ac.at}

\author{Laura Kov\'{a}cs$^1$}
\email{laura.kovacs@tuwien.ac.at}

\author{Anton Varonka$^1$}
\email{anton.varonka@tuwien.ac.at}

\address{$^1$TU Wien, Vienna, Austria}

\thanks{
Acknowledgements: We are grateful to Amaury Pouly and
James Worrell for valuable discussions. The work presented in this
paper was partially supported by the ERC consolidator grant ARTIST
101002685, the WWTF grant ProbInG ICT19-018,
and the EU Marie Sklodowska-Curie Doctoral Network LogiCS@TU Wien
Grant Nr.\ 101034440.
}

\begin{document}

\title[From Polynomial Invariants to Linear Loops]{From Polynomial Invariants to Linear Loops}

\begin{abstract}
Loop invariants are software properties that hold before and after every
iteration of a loop. As such, invariants provide 
inductive arguments that are key in automating the verification of
program loops.
The problem of
generating loop invariants; in particular, invariants
described by polynomial relations (so called \emph{polynomial
  invariants}), is therefore one of the hardest problems in software
verification.
In this paper we advocate an alternative solution to invariant
generation. Rather than inferring invariants from loops, we 
synthesise loops from invariants. As such, we generate loops that
satisfy a given set of polynomials; in other words, our synthesised
loops are correct by construction.

Our work turns the problem of loop synthesis into a symbolic
computation challenge. We employ techniques from algebraic
geometry to synthesise loops whose polynomial invariants are
described by \emph{pure difference} binomials. We show that such 
complex polynomial invariants need ``only'' linear loops, opening up
new venues in program optimisation. 
We prove the existence of non-trivial loops with linear updates for
polynomial invariants generated by pure difference
binomials. Importantly, we introduce an algorithmic approach that
constructs linear loops from such polynomial invariants, by 
generating linear recurrence sequences that have specified algebraic relations among their terms.
\end{abstract}

\keywords{Program Synthesis, Loop Invariants, Toric Ideals, C-finite Sequences}
\maketitle

\section{Introduction} \label{sec:introduction}

\emph{Loop invariants}, or more simply \emph{invariants} in the
sequel,  are software properties that hold before and after every iteration of a loop.
Invariants are often key to inductive arguments for
automating the verification of programs with loops, see, 
e.g.~\cite{Rodriguez04,Kovacs08,Oliveira16,Kincaid18,Humenberger18}.
One challenging aspect in invariant synthesis is the derivation of
\emph{polynomial invariants} for loop programs over numeric data structures.
Such invariants 
are defined  by polynomial relations 
\(P(x_1, \dots, x_d) = 0\) among the program variables $x_1, \ldots,
x_d$. A nice property of polynomial invariants is that they define a
polynomial ideal, called a \emph{polynomial invariant
ideal}~\cite{Rodriguez04,Kovacs08}. As such, the problem of generating
(all) polynomial invariants is reduced to generating a finite basis of
the polynomial invariant ideal. 

\emph{In this paper, 
we reverse engineer the problem of polynomial invariant generation and
propose an alternative solution to invariant synthesis.} Rather than
generating polynomial invariants for a given loop, we synthesise loops
for a given polynomial invariant. %
\hl{Assume the postcondition of some section of code with a loop 
is given by a conjunction of polynomial equalities.
Instead of generating an invariant that implies the postcondition,
our solution comprises the synthesis of a new loop, 
one that is correct with respect to the specification--%
and that by construction.}

\noindent\textbf{Linear Loop Synthesis.} The key aspect of our work
comes with considering \emph{homogeneous linear loops}, hereafter
simply linear loops. 
Linear loops are a class of single-path loops whose update assignments are determined by a homogeneous system of linear equations in the program variables.
More specifically, a \emph{linear loop} is a loop program of the form
\begin{equation*}
	\mathcal{L} \colon \bm{x} \leftarrow \bm{s};\ \textrm{while}\ \star\ \textrm{do}\ \bm{x}\leftarrow M \bm{x},
\end{equation*}
where \(\bm{x}\) is a \(d\)-dimensional column vector of program variables, 
\(\bm{s}\) is a \(d\)-dimensional vector with rational entries, 
and \(M\) is a \(d\times d\)-matrix with rational entries.
Herein we employ the notation \(\star\), instead of using
\texttt{true} as loop guard, as our focus is on loop synthesis rather
than proving loop termination.

Linear loops are fundamental objects in the computational study of recurrence.
On the one hand, this class of loops represents a restrictive computational model.
On the other hand, fundamental problems such as the decidability of the Halting Problem are open for this class~\cite{ouaknine2015survey}.
For the avoidance of doubt, we lose no generality by working over the class of linear loops rather than the class of \emph{affine loops} (those single-path loop programs with update assignments of the form \(\bm{x}\leftarrow M \bm{x} + \bm{v}\) where \(\bm{v}\in\QQ^d\)).
Indeed,  the problem of studying
the functional behaviour of affine loops can be reduced  to that of studying linear
loops; admittedly the reduction step will, in general, increase the
number of program variables~\cite{Rodriguez04,Kovacs08,ouaknine2015survey}.\smallskip

\noindent\textbf{Linear Loops and Recurrences.} 
Given a (system of) polynomial relation(s), our work 
constructs a linear loop.
By construction, each of the given polynomial relations is satisfied by the loop variables before and after each
iteration; thus each relation is an invariant of the loop. 
(Specific details are given in the discussion on our contributions.)
We employ techniques from algebraic geometry to
synthesise loops whose polynomial invariants are described by
\emph{pure difference binomials}. 
For such polynomial invariants, the procedure in \cref{sec:binomial} shows that ``only” linear loops are required, thus addressing
challenging aspects of arithmetic reductions in program optimisation~\cite{SRed01}. 
Further, we prove the existence of non-trivial loops with linear updates for
polynomial invariants generated by pure difference binomials
(see \cref{sec:further}). 
\cref{runex1,runex2} showcase  
linear loops that are synthesised by our work. We note that the 
study of \emph{pure difference ideals}, those ideals generated by pure
difference binomials, and their respective linear loops is well motivated. %
The classes of lattice and toric ideals (defined in \cref{sec:preliminaries}) are pure difference ideals.
Pure difference ideals also appear in the encoding of random walks on Markov chains of the form \(\N^d\) by  way of the connected components of the underlying graph  \cite{diaconis1998lattice,kahle2014positive}.
\begin{example}\label{runex1}
	Consider the input ideal~$I_1 \subseteq \QQ[x,y,z]$ generated by the
        polynomials $p_1 = x^2-y$ and $p_2 = x^3-z$. Note that $p_1$
        and $p_2$ are pure difference binomials in $x,y,z$. 
	The procedure of~\cref{sec:binomial} outputs a linear loop~$\loopie_1$ (see \cref{subfig:loop1}).
\end{example}
	\begin{example}\label{runex2}
	Given the input ideal~$I_2 \subseteq \QQ[x,y]$ generated by the polynomial~$p = x^3y-xy^3$.
	The procedure of~\cref{sec:binomial} outputs a linear loop~$\loopie_2$ (see \cref{subfig:loop2}). %
Note, $I_1$ is the invariant ideal of~$\loopie_1$. Loop~$\loopie_2$, in turn, satisfies all invariants in~$I_2$; however, $I_2$ is a strict subset of the invariant ideal of $\loopie_2$.
\end{example}

\begin{figure}[t]
	\begin{subfigure}[b]{0.4\linewidth}
			\caption{Loop~$\loopie_1$.} \label{subfig:loop1}
		\begin{algorithmic}
			\STATE $(x, y, z):= (1, 1, 1);$
			\WHILE{$\star$}
			\STATE $x:=2x;$
			\STATE $y:=4y;$
			\STATE $z:=8z;$
			\ENDWHILE
		\end{algorithmic}%
	\end{subfigure}%
	\begin{subfigure}[b]{0.29\linewidth}
				\caption{Loop~$\loopie_2$.} \label{subfig:loop2}
		\begin{algorithmic}
			\STATE $(x, y):= (1, 1);$
			\WHILE{$\star$}
			\STATE $x:=2x;$
			\STATE $y:=2y;$
			\ENDWHILE
			\STATE
		\end{algorithmic}%
	\end{subfigure}
	\caption{Synthesised linear loops \(\loopie_1\) and \(\loopie_2\)}
	\label{loops}
	\end{figure}

Key to our work is modelling loops as linear recurrence sequences that have specified algebraic relations among their terms.
Let \(x_n\) denote the value of a loop variable
\(x\) at the \(n\)th loop iteration.
Recall that for a linear loop,
\(x_n\) is given by a linear recurrence sequence with constant
coefficients, commonly known as a \emph{C-finite} sequence~\cite{everest2003recurrence}.
C-finite sequences are therefore key to our loop synthesis procedure (\cref{sec:binomial}): we model program loops as systems of recurrence equations.
Indeed, we aim to synthesise a system of C-finite recurrence
sequences, and hence a linear loop, 
that satisfies given polynomial relations. \smallskip %

\noindent\textbf{Our Contributions.} \hfill%
\begin{enumerate}
	\item For a polynomial ideal, we consider the problem of synthesising a \emph{non-parametrised} loop (i.e., synthesising both the loop body and concrete initial values) such that every polynomial in the ideal is an invariant of the loop.
	In particular, we demonstrate a procedure for synthesising loops from pure difference ideals.  In fact,
		\begin{enumerate}
			\item Given a pure difference ideal \(I\), we describe a process that synthesises a linear loop with invariant ideal~\(I\) (\cref{sec:binomial}).
			\item Suppose that \(I=\langle p_1,\ldots, p_k\rangle\) is a polynomial ideal not necessarily generated by binomials, for which, by a change of coordinates, there exists a generating set of pure difference binomials.
			We present a procedure that outputs a linear loop such that any polynomial \(p\in\langle p_1,\ldots, p_k\rangle\) is an invariant of said loop (\cref{sec:to-bin}).
		\end{enumerate}
	\item Under reasonable assumptions about the input, the aforementioned generated loops are \emph{non-trivial}.
By non-trivial, we mean that the orbit (or trajectory) of the vector 
	\begin{equation*}
		\seq[\infty]{\bm{x}(n)}{n=0} = \langle x_1(n) , \ldots, x_d(n) \rangle_{n=0}^\infty
	\end{equation*}
given by the values of the loop variables \(x_1,\ldots, x_d\) at the \(n\)th loop iteration is infinite.
That is to say, the loop program functions on an infinite-state system.
We shall defer a formal definition of a non-trivial loop to \cref{sec:preliminaries}.
\end{enumerate}

In \cref{sec:further}, we consider corollaries to our loop synthesis procedure from \cref{sec:binomial}.
In particular, we consider specialisations for restrictive classes of input ideals.
One corollary, concerning \emph{canonical pure difference binomials}, is as follows.
\begin{corollary} \label{cor:canonical}
Suppose that $I = \langle p \rangle \subseteq \QQ[x_1, \dots, x_d]$ where \(p\) is an irreducible polynomial of the form \(p=x_1^{\alpha_1} \cdots x_d^{\alpha_d} - x_1^{\beta_1} \cdots x_d^{\beta_d}\).
Then the procedure in \cref{sec:binomial} synthesises a linear loop for which \(I\) is precisely the invariant ideal of the loop.
\end{corollary}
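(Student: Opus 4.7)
The plan is to establish the nontrivial containment $J \subseteq \langle p \rangle$, where $J$ denotes the invariant ideal of the loop $\loopie$ synthesised by the procedure of~\cref{sec:binomial}; the reverse containment $\langle p \rangle \subseteq J$ is immediate from the correctness of the construction, as $p$ is by design an invariant of $\loopie$.

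I would pass to the geometric side. Since $p$ is irreducible and $\QQ[x_1,\ldots,x_d]$ is a unique factorisation domain, $\langle p \rangle$ is a prime ideal, so the variety $V(p) \subseteq \overline{\QQ}^d$ is an irreducible hypersurface of dimension $d-1$. The invariant ideal coincides with the vanishing ideal $I(\overline{O})$ of the Zariski closure of the orbit $O = \{M^n \bm{s} : n \in \NN\}$, and the correctness of the construction yields $\overline{O} \subseteq V(p)$. By irreducibility of $V(p)$, the task reduces to showing $\dim \overline{O} = d-1$: for then $\overline{O}$ is a closed subvariety of $V(p)$ of maximal dimension, forcing $\overline{O} = V(p)$ and consequently $J = I(V(p)) = \sqrt{\langle p \rangle} = \langle p \rangle$.

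I expect the dimension calculation to be the main obstacle. Writing $M = \operatorname{diag}(\lambda_1, \ldots, \lambda_d)$ and $\gamma = \alpha - \beta \in \ZZ^d$, the procedure of~\cref{sec:binomial} selects the eigenvalues $\lambda_i$ so that the lattice of multiplicative relations $\{v \in \ZZ^d : \prod_{i} \lambda_i^{v_i} = 1\}$ is precisely $\ZZ \gamma$; irreducibility of $p$ guarantees that $\gamma$ is primitive, so this lattice has rank one. Invoking the standard description of Zariski closures of cyclic subgroups of the algebraic torus $(\overline{\QQ}^*)^d$, the closure of $\{(\lambda_1^n, \ldots, \lambda_d^n) : n \in \NN\}$ is the subtorus cut out by $\prod_{i} t_i^{\gamma_i} = 1$, which has dimension $d-1$. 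Since $\overline{O}$ is the translate of this subtorus by $\bm{s}$, it likewise has dimension $d-1$, which closes the argument. A secondary technical point to verify is that the non-triviality guarantee from the procedure rules out the degenerate scenario in which the orbit lies in a proper coordinate subspace, so that the translate description above is genuinely of the correct dimension.
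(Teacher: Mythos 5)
Your argument is correct, but it takes a genuinely different route from the paper's. The paper derives the corollary from \cref{shortcuts}~(1) by purely ideal-theoretic means: irreducibility forces $p$ to be canonical; every pure difference generator of the lattice ideal $I_L$ of $L=\ZZ(\bm{\alpha}-\bm{\beta})$ is divisible by $p$, so $\ideal{p}=I_L$; primality of $\ideal{p}$ then forces $L$ to be saturated, whence $\ideal{p}=I_{\Sat(L)}$; and \cref{toric-synt} identifies $I_{\Sat(L)}$ as precisely the invariant ideal of the synthesised loop. You instead argue geometrically: the orbit closure is an irreducible subvariety of the irreducible hypersurface $V(p)$, and a dimension count ($d-1$ on both sides, via the description of the Zariski closure of a cyclic subgroup of the torus) forces equality, after which radicality of the prime $\ideal{p}$ finishes the job. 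Both are sound. Your route is more conceptual and avoids manipulating lattice-ideal generators, but it leans on two facts you assert rather than prove: (i) irreducibility of $p$ implies the exponent vector $\bm{\alpha}-\bm{\beta}$ is primitive (true, and worth a line: a common monomial factor, or an exponent vector divisible by $n\geq 2$, exhibits $p$ as a monomial multiple of, respectively a difference of $n$th powers of, monomials, hence reducible), and (ii) the Zariski closure of the forward orbit $\{\bm{\lambda}^n : n\in\NN\}$ agrees with that of the full cyclic group, a point the paper itself only waves at (``mutatis mutandis'') in the proof of \cref{toric-synt}. Note also that the translate by $\bm{s}$ is vacuous here, since the procedure initialises at $(1,\dots,1)$, the identity of the torus. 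The paper's route stays entirely within the already-established chain $I\subseteq I_L\subseteq I_{\Sat(L)}$ and needs no dimension theory.
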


We end this note with a conclusion that discusses related work and proposes directions for future research (\cref{sec:conclusion}).

\section{Preliminaries} \label{sec:preliminaries}

\subsection{Abstract Algebra}

Let \(\K\) be a field and \(\K^d\) the vector space of \(d\)-tuples in \(\K\).
A \emph{monomial} in the indeterminates (or polynomial variables) \(x_1, \ldots, x_d\) is an expression \(x_1^{v_1} x_2^{v_2} \cdots x_d^{v_d}\) where each exponent \(v_j\) is a non-negative integer.
We %
employ the shorthand \(\bm{x}^{\bm{v}}\) to denote such monomials. A \emph{polynomial} is a finite linear combination of monomials. This can be extended to include the negative integer exponents, if needed. We shall explicitly refer to monomials $\bm{x}^{\bm{v}}$ with $\bm{v} \in \ZZ^d$ (or their linear combinations) as \emph{Laurent monomials} (Laurent polynomials, respectively), whenever negative powers are considered.

Let \(\K[x_1,\ldots, x_d]\) be a polynomial ring, for brevity \(\K[\bm{x}]\). For computability reasons, throughout \(\K\) will be the field of algebraic numbers, so ~$\K:=\alg$.
We employ the standard notation~$\KK^*$ to refer to the set~$\KK \setminus \{0\}$ of invertible elements of the field and denote by $\GL_d(\KK)$ the multiplicative group of $d\times d$ invertible matrices over~$\KK$.

\subsection{Algebraic Geometry}

We recall standard preliminary material and terminology from the field of algebraic geometry, and refer to~\cite{cox2011toric,cox2015ideals} for more details. 

\subsubsection*{Ideals}

A \emph{polynomial ideal} is a subset $I \subseteq \K[\bm{x}]$ that satisfies the following properties:
$0\in I$; \(I\) is closed under addition; and for each $p \in \K[\bm{x}]$ and $q \in I$, necessarily $pq \in I$.
For a set of polynomials $S \subseteq \K[\bm{x}]$, the \emph{ideal generated by $S$} is given by
\begin{equation*}
    I = \ideal{S} := \{ s_1 q_1 + \cdots + s_\ell q_\ell : s_j \in S, q_j \in \K[\bm{x}], \ell \in \N \}.
\end{equation*}

A polynomial ideal \(I\) is \emph{proper} if \(I\) is not equal to \(\K[\bm{x}]\), \(I\) is \emph{prime} if \(p\cdot q\in I\) implies that \(p\in I\) or \(q\in I\), and \(I\) is \emph{radical} if \(p^n \in I\) implies that \(p\in I\).

Key to our discussion will be the bases for polynomial ideals.
\begin{theorem}[Hilbert's Basis Theorem]
 Every ideal in \(\K[\bm{x}]\) has a finite basis.
\end{theorem}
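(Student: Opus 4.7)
The plan is to reduce the statement to the following key lemma: \emph{if $R$ is a commutative ring in which every ideal is finitely generated (a Noetherian ring), then so is the polynomial ring $R[x]$}. Since $\K$ is a field, every ideal of $\K$ is either $\{0\}$ or $\K$ itself, so $\K$ is trivially Noetherian; the theorem then follows by induction on $d$, applied to the chain $\K \subseteq \K[x_1] \subseteq \K[x_1,x_2] \subseteq \cdots \subseteq \K[x_1,\ldots,x_d]$, at each step viewing $\K[x_1,\ldots,x_k]$ as $\K[x_1,\ldots,x_{k-1}][x_k]$.

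For the key lemma, given an arbitrary ideal $I \subseteq R[x]$, I would extract leading-coefficient data from $I$. Specifically, for each $k \in \N$ let $L_k \subseteq R$ consist of $0$ together with the leading coefficients of all polynomials in $I$ of degree exactly $k$, and set $L := \bigcup_k L_k$. A short check shows each $L_k$ is an ideal of $R$, and moreover $L_0 \subseteq L_1 \subseteq L_2 \subseteq \cdots$, because multiplication by $x$ carries $I$ into $I$ and increases degree by one. The union $L$ is therefore an ideal of $R$, hence finitely generated by hypothesis. Pick finitely many generators of $L$, lift them to polynomials $f_1,\ldots,f_n \in I$ of degrees $d_1,\ldots,d_n$, and set $N := \max_i d_i$; then the chain stabilises at $L_N = L$.

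Next, for each $k \in \{0,1,\ldots,N-1\}$, the ideal $L_k$ is also finitely generated, so I would pick polynomials $f_{k,1},\ldots,f_{k,m_k} \in I$ of degree exactly $k$ whose leading coefficients generate $L_k$. The claim is that the finite collection $\{f_i\} \cup \{f_{k,j}\}$ generates $I$. To prove this I would fix $g \in I$ and argue by induction on $m := \deg g$: if $m \geq N$, the leading coefficient of $g$ lies in $L$ and so is an $R$-linear combination of the leading coefficients of the $f_i$, whence a suitable combination $\sum_i c_i x^{m-d_i} f_i$ has the same leading term as $g$, and subtracting it yields an element of $I$ of strictly smaller degree; if $m < N$, the same argument uses the $f_{m,j}$ instead.

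The main obstacle is the bookkeeping in this degree-reduction step: one cannot hope to reduce everything using only the $f_i$'s, because their leading coefficients generate $L$ but need not generate each individual $L_k$ for $k < N$. Separating the small- and large-degree cases, as above, addresses this cleanly, and since both reduction steps strictly decrease the degree, the induction terminates and expresses $g$ as an $R[x]$-linear combination of the chosen generators. This completes the lemma, and hence, via the induction on $d$, the theorem.
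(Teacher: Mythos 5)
Your argument is the classical proof of Hilbert's Basis Theorem and it is correct. The paper itself offers no proof: the theorem is stated in the preliminaries as standard background, with the reader referred to texts such as Cox--Little--O'Shea, so there is no in-paper argument to compare against. Your reduction is the textbook one: a field is trivially Noetherian, induction on the number of indeterminates via $\K[x_1,\ldots,x_k]\cong\K[x_1,\ldots,x_{k-1}][x_k]$, and the one-variable step by the ascending chain of leading-coefficient ideals $L_0\subseteq L_1\subseteq\cdots$. The details you flag are exactly the ones that need care and you handle them properly: including $0$ in each $L_k$ so that cancellation of leading terms does not break the ideal axioms, observing that the chain stabilises at $N=\max_i d_i$ because every generator of $L=\bigcup_k L_k$ already lies in $L_N$, and splitting the degree-reduction step into the cases $\deg g\geq N$ (reduce against the $f_i$) and $\deg g<N$ (reduce against the degree-$\deg g$ lifts $f_{\deg g,j}$), with the induction grounded at $g=0$. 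Nothing is missing; this is a complete and standard proof of a result the paper deliberately leaves unproved.
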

Seminal work by Buchberger introduced Gr\"{o}bner bases for polynomial ideals, which permit the algorithmic computation of key properties of polynomial ideals~\cite{DBLP:journals/jsc/Buchberger06a,cox2015ideals}, including ideal  membership, ideal union/intersection, elimination ideals, and many more. 
A key property that we draw upon in our synthesis procedure is the computation of a basis for a \emph{saturation} of an ideal.
Given \(q\in\K[\bm{x}]\), we compute a basis for the \emph{saturation of \(I\) with respect to \(q\)}
	\begin{equation*}
		I:(q)^\infty := \{p\in\K[\bm{x}] : q^n p\in I \text{ for some } n\in\N\}.
	\end{equation*}
We also recall relevant structural properties for classes of polynomial ideals.
\begin{theorem}[{\cite[Theorem 6, Chapter 4.6]{cox2015ideals}}]\label{radical-mpd}
	Each radical polynomial ideal \(I\) in \(\K[\bm{x}]\) admits a unique decomposition as an intersection of finitely many prime ideals \(I = p_1 \cap \cdots \cap p_r\) such that for each pair \(p_i\nsubset p_j\). The decomposition is commonly referred to as the \emph{minimal decomposition of a radical ideal}. 
\end{theorem}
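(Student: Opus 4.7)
The plan is to translate the statement through the algebra--geometry dictionary and then argue on the geometric side, using Noetherian induction for existence and a standard prime containment argument for uniqueness.

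For the setup, I would first invoke Hilbert's Nullstellensatz (available since \(\K=\alg\) is algebraically closed): because \(I\) is radical, \(I=\mathbf{I}(\mathbf{V}(I))\), so describing \(I\) as an intersection of primes is equivalent to describing \(\mathbf{V}(I)\) as a union of irreducible affine varieties. Thus the goal becomes: every affine variety decomposes uniquely (up to order) as an irredundant finite union of irreducible subvarieties.

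For \emph{existence}, I would argue by Noetherian induction. Let \(\mathcal{S}\) be the collection of radical ideals in \(\K[\bm{x}]\) that cannot be written as a finite intersection of primes, and suppose for contradiction that \(\mathcal{S}\neq\emptyset\). By Hilbert's Basis Theorem, \(\K[\bm{x}]\) is Noetherian, so \(\mathcal{S}\) contains a maximal element \(I_0\). Since \(I_0\) is not prime, there exist \(a,b\notin I_0\) with \(ab\in I_0\). Set \(J_a:=\sqrt{I_0+\langle a\rangle}\) and \(J_b:=\sqrt{I_0+\langle b\rangle}\); both strictly contain \(I_0\). A short computation (if \(f^m=u+ra\) and \(f^n=v+sb\) with \(u,v\in I_0\), then \(f^{m+n}\in I_0\) because \(ab\in I_0\), and radicality of \(I_0\) gives \(f\in I_0\)) shows \(I_0=J_a\cap J_b\). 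By maximality each of \(J_a,J_b\) is a finite intersection of primes, hence so is \(I_0\), contradicting \(I_0\in\mathcal{S}\). Finally, from any such decomposition I discard any \(p_i\) that contains some other \(p_j\) (such a \(p_i\) is redundant in the intersection), obtaining the required irredundant form.

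For \emph{uniqueness}, I would show that the \(p_i\) appearing in any irredundant decomposition are precisely the primes that are minimal among those containing \(I\), a condition intrinsic to \(I\). If \(P\) is a prime with \(I=\bigcap p_j\subseteq P\), the standard argument (pick \(a_j\in p_j\setminus P\) for each \(j\); then \(\prod a_j\in\bigcap p_j\subseteq P\), contradicting primality of \(P\)) forces \(p_j\subseteq P\) for some \(j\), so every minimal prime over \(I\) equals one of the \(p_j\). Conversely, irredundancy (\(p_i\not\subseteq p_j\) for \(i\neq j\)) shows no \(p_i\) strictly contains another \(p_j\) from the decomposition, and combined with the previous step this forces each \(p_i\) itself to be minimal over \(I\). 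Hence the set \(\{p_1,\dots,p_r\}\) is uniquely determined by \(I\).

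The only delicate step is the small algebraic lemma \(I_0=\sqrt{I_0+\langle a\rangle}\cap\sqrt{I_0+\langle b\rangle}\), since it is where radicality is genuinely used; the rest is bookkeeping once the Noetherian and Nullstellensatz tools are in place.
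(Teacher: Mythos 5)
Your argument is correct and complete: the Noetherian-induction step (with the key identity \(I_0=\sqrt{I_0+\langle a\rangle}\cap\sqrt{I_0+\langle b\rangle}\), which you verify properly using \(ab\in I_0\) and radicality) gives existence, and the prime-avoidance argument correctly identifies the components of an irredundant decomposition with the minimal primes over \(I\), which yields uniqueness. The paper offers no proof of its own here --- it simply cites Cox--Little--O'Shea --- and your argument is essentially the standard one from that reference, carried out on the algebraic rather than the geometric side. One small remark: the opening appeal to the Nullstellensatz is never actually used, since both halves of your proof work directly with ideals in the Noetherian ring \(\K[\bm{x}]\); dropping it makes the proof valid over an arbitrary field.
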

\subsubsection*{Varieties}

The notion of an algebraic variety generalises the concept of algebraic curves to \(d\) dimensions.
An (affine) \emph{algebraic} variety \(V\subseteq \K^d\) is the locus of points satisfying a system of polynomial equations.
In the sequel, we shall focus on the varieties associated with polynomial ideals.
Let \(I\) be a polynomial ideal in \(\K[\bm{x}]\).
The locus of points in \(\KK^d\) where the polynomials in \(I\) simultaneously vanish is called the \emph{variety of the ideal} and denoted by \(V(I)\).
Specifically,
	\begin{equation*}
		V(I) = \{\bm{x}\in\K^d : p(\bm{x})=0 \text{ for all } p\in I\}.
	\end{equation*}
A \emph{subvariety} is a subset of a variety that is, itself, a variety. A variety \(V\) is \emph{irreducible} if when written as a union of subvarieties \(V= V_1\cup V_2\) then necessarily either \(V_1=V\) or \(V_2=V\).
An affine variety is irreducible if and only if \(I(V)\) is a prime ideal.

Let \(V\subseteq \K^d\) be an affine variety. 
We endow \(V\) with the \emph{Zariski topology} by declaring that the closed sets of \(V\) are precisely the subvarieties of \(V\). 
In this way, we extend the definition of \emph{Zariski closure} as follows.
The \emph{Zariski closure} of a subset~$S \subseteq \K^d$ is defined to be the smallest affine variety containing \(S\).

\subsection{C-finite Sequences}
The class of \emph{C-finite} sequences~\cite{DBLP:series/tmsc/KauersP11} consists of the real-algebraic linear sequences \(\seq[\infty]{u_n}{n=0}\) that satisfy recurrence relations %
\begin{equation} \label{eq:rec}
	u_{n+d} = a_{d-1} u_{n+d-1} + \cdots a_0 u_n,
\end{equation}
with constant coefficients \(a_{d-1},\ldots, a_0\) such that \(a_0\neq 0\).
A C-finite sequence \(\seq{u_n}{n}\) that satisfies \eqref{eq:rec} is entirely determined by its initial values \(u_0, \ldots, u_{d-1}\).
The \emph{order} of \(\seq{u_n}{n}\) is the minimum length of the recurrence relations it satisfies.

	Given a system of C-finite sequences $\seq{x_1(n)}{n}$, \dots, $\seq{x_d(n)}{n}$, 
	consider the ideal consisting of all polynomials~$P \in \KK[y_1, \dots, y_d]$ such that the polynomial equality~$P = 0$ is satisfied for all~$n \geq 0$ after simultaneously setting $y_j := x_j(n)$ for all~$j \in \{1, \dots, d\}$.
	We call this ideal the \emph{ideal of algebraic relations} over~$\KK$ associated with the aforementioned system.

\subsection{Loop Invariants}\label{sec:prelim-invariants}

Let $\loopie$ be a linear loop with variables~$\bm{x} = (x_1, \dots, x_d)$.
For each loop variable \(x_j\), let \(\seq{x_j(n)}{n}\) denote the sequence whose
\(n\)th term is given by the value of \(x_j\) after the \(n\)th loop iteration. 
A \emph{polynomial invariant} of~$\loopie$ is a polynomial~$P \in \KK[\bm{x}]$ such that \[P(x_1(n), \dots, x_d(n)) = 0\] holds for all~$n\geq 0$. 
The set of polynomial invariants of \(\mathcal{L}\) forms an ideal, called the \emph{(polynomial) invariant ideal} of \(\mathcal{L}\)~\cite{Rodriguez04,Kovacs08}.
In other words, the invariant ideal of~$\loopie$ is the ideal of algebraic relations over~$\KK$ among the sequences~\(\seq{x_1(n)}{n},\ldots, \seq{x_d(n)}{n}\). 
We note that, in our setting, it is always possible to compute a finite basis for the invariant ideal using Gröbner bases computation~\cite{kauers2008algrel,Kovacs08}.

\hl{The invariant ideal~$I$ of~$\loopie$ is radical.} It can be equivalently described by its variety~$V(I) \subseteq \KK^d$. 
Let \[\OO:= \left\{ (x_1(n), \dots, x_d(n)) \in \KK^d : n \in \NN \right\}\] 
be the \emph{orbit of the loop}.
Informally speaking, $\OO$ is the set of variable vectors that~$\loopie$ can reach during its execution. 
The Zariski closure of~$\OO$ in $\KK^d$ is precisely the variety~$V(I)$
and, further, is the smallest algebraic variety that contains~$\OO$.
The Zariski closure of $\OO$ is the \emph{strongest algebraic invariant} of~$\loopie$~\cite{Ouaknine20}.
We shall sometimes abuse terminology and refer to the Zariski closure of \(\OO\) as the \emph{Zariski closure of \(\mathcal{L}\)}.
A loop~$\loopie$ is \emph{trivial} if its orbit~$\OO$ is a finite set.

\subsection{Lattices}
Let $(G, +)$ be an additive abelian group. A set $X \subseteq G$ is \emph{linearly independent} if for any $n_1, \dots, n_k \in \ZZ$ and any pairwise distinct $a_1, \dots, a_k \in X$ a linear combination $\sum_{i=1}^k n_ia_i = 0$ is zero only if $n_1 = \dots = n_k = 0$. If $\langle X \rangle = G$, the set $X$ is called a \emph{basis} of $G$.

An abelian group that has a finite basis is referred to as \emph{lattice}. 
It is known that a lattice has fixed basis size, called \emph{rank} and, furthermore, every lattice of rank~$r \geq 1$ is isomorphic to~$\ZZ^r$ as a group. 
A \emph{saturation} of a lattice~$L \subseteq \ZZ^d$ is a sublattice of~$\ZZ^d$ 
defined as \[\Sat(L) := \{\bm{u} \in \ZZ^d : c \bm{u} \in L \text{ for some } c \in \ZZ \setminus \{0\}\}.\] 
If~$L = \Sat(L)$, the lattice~$L$ is said to be \emph{saturated}.

An important class of lattices are those that describe the multiplicative relations among algebraic numbers. 
The \emph{exponent lattice} of $\zeta_1, \dots, \zeta_k \in \KK^*$ is given by \[\Lexp{\zeta_1, \dots, \zeta_k} := \left\{(n_1, \dots, n_k) \in \ZZ^k : \prod_{i=1}^k \zeta_i^{n_i} = 1\right\}.\]

The \emph{lattice ideal} $I_L \subset \KK[x_1, \dots, x_d]$ of a lattice~$L \subseteq \ZZ^d$ is the ideal
\(I_L:= \ideal{\bm{x}^{\bm{\alpha}} - \bm{x}^{\bm{\beta}}  : \bm{\alpha}, \bm{\beta} \in \NN^d, \bm{\alpha} - \bm{\beta} \in L }\).

\subsection{Toric Ideals and Varieties}

Consider an integer $s \times d$-matrix \begin{equation*}
	A:= \begin{pmatrix}
		a_{11} & a_{12} & \dots & a_{1d} \\
		a_{21} & a_{22} & \dots & a_{2d} \\
		& & \dots & \\
		a_{s1} & a_{s2} & \dots & a_{sd}
	\end{pmatrix}
\end{equation*}
with column set $\{\bm{a}_1, \dots, \bm{a}_d \} \subset \ZZ^s$. %
The purpose of defining this matrix is twofold. 

\emph{First}, we can view \(A\) as a linear transformation 
	$T_A \colon \ZZ^d \rightarrow \ZZ^s$ such that $T_A(\bm{x}) = A\bm{x}.$
Thus we define the \emph{kernel of \(A\)} by \(\ker{A}:=\bigl\{ \bm{x}\in\ZZ^d : T_A(\bm{x})=\bm{0} 	\bigr\}\). It is easy to see that $\ker{A} \subseteq \ZZ^d$ is a lattice. In particular, one can define a lattice ideal for~$\ker{A}$. Moreover, $\ker{A}$ is saturated.

\emph{Second}, we can consider the columns of~$A$ as Laurent monomials over the indeterminates~$z_1, \dots, z_s$. 
Formally, a column~$\bm{a}_i$ of~$A$ defines a Laurent monomial as
\(
	\chi^i(z_1, \dots, z_s) = z_1^{a_{1i}}  \dots  z_s^{a_{si}}
\).
We define the ring homomorphism~
	\begin{equation*}
		\pi_A : \KK[x_1, \dots, x_d] \rightarrow \KK[z_1^{\pm 1}, \cdots, z_s^{\pm 1}]
	\end{equation*}
by mapping the variables  $x_1, \dots, x_d$ to $\chi^1, \dots, \chi^d$. 
The codomain of the homomorphism is the ring of Laurent polynomials over~$\KK$ with~$s$ variables.

The \emph{toric ideal}~$I_A$ associated with matrix~$A$ is the kernel of~$\pi_A$. The ideal~$I_A$ is a pre-image of a prime ideal, and hence $I_A$ is itself a prime ideal of $\K[\bm{x}]$. 
As shown in~\cite[Proposition 1.1.9]{cox2011toric}, a toric ideal has an equivalent definition as a lattice ideal
\begin{equation}\label{toric-as-kernel}
	I_A = \langle \bm{x}^{\bm{u}} - \bm{x}^{\bm{v}} : \bm{u},\bm{v}\in\N^d,\, \bm{u} - \bm{v} \in\ker{A} \rangle;
\end{equation} 
that is, as an ideal of a lattice~$\ker{A}$. We emphasise that not every lattice is a kernel of a linear transformation and respectively, lattice ideals are not necessarily toric. In fact, toric ideals are precisely the prime lattice ideals~\cite[Proposition~1.1.11]{cox2011toric}.

The homomorphism of $\KK$-algebras is naturally associated with a group homomorphism~$\varphi_A \colon \KK^s \rightarrow \KK^d$, defined by
	\begin{equation*}
		\varphi_A(\bm{z}) = (\chi^1(\bm{z}), \dots, \chi^d(\bm{z})).
	\end{equation*}
The vanishing set of a toric ideal~$I_A$ is the Zariski closure of~$\varphi_A(\KK^s)$. Such irreducible varieties of~$\KK^d$ are called \emph{toric varieties}. They share a number of important properties and have been widely studied along with their ideals \cite{cox2011toric,sturmfels1996gbcp}.

We refer to binomials of the form $\bm{x}^{\bm{u}} - \bm{x}^{\bm{v}}$ as \emph{pure difference binomials}. A \emph{pure difference ideal} is generated by pure difference binomials. 
Note that lattice and, in particular, toric ideals are pure difference ideals. Let~$p = x_1^{\alpha_1} \cdots x_d^{\alpha_d} - x_1^{\beta_1} \cdots x_d^{\beta_d}$ be a pure difference binomial. Its \emph{exponent vector} is $(\alpha_1 - \beta_1, \dots, \alpha_d - \beta_d) \in \ZZ^d$. 

\begin{remark} \label{rem:canonical}
Clearly, there are infinitely many pairwise distinct pure difference binomials sharing the same exponent vector
---the vector is invariant when we multiply a pure difference binomial by a monomial, e.g., $x^2 - y^2$ and $x^3y - xy^3$. 
For a given exponent vector~$\bm{v} = (v_1, \dots, v_d) \in \ZZ^d$, 
we define a unique \emph{canonical (pure difference) binomial} 
as $p(x_1, \dots, x_d) := \bm{x}^{\bm{{v_+}}} - \bm{x}^{\bm{{v_-}}}$, where 
	\begin{equation*}
	\bm{{v_+}}=(\max\{v_1, 0\}, \dots, \max\{v_d, 0\}) \in \NN^d
	\end{equation*}
and $\bm{{v_-}} = \bm{{v_+}} - \bm{v} \in \NN^d$. 
Note that $x^2-y^2$ is the canonical binomial associated with the vector $(2, -2)$, 
whereas $x^3y-xy^3$ (see also~\cref{runex2}) is not canonical.
\end{remark}
\section{From pure difference ideals to linear loops} \label{sec:binomial}

We now describe our loop synthesis approach, by restricting our input
polynomials/ideals to pure difference binomials/ideals.
Given a pure difference ideal \(I\) as an input, our
procedure outputs  a linear loop such that each polynomial in \(I\) is
an invariant of the loop. We thus show that \emph{loop synthesis is
decidable for pure difference polynomials/ideals}.

We next outline our loop synthesis procedure and illustrate its main
steps in \cref{tblofex}.

\subsection*{Loop Synthesis Procedure:} %

\hrule
\vspace{1mm}
\begin{enumerate}[leftmargin=*, wide=0pt]
	\item[{\bf Input:}] A list $p_1, \dots, p_k \in \KK[x_1, \dots, x_d]$ of pure difference binomials or, alternatively, a polynomial ideal \(I\) generated by pure difference binomials.
	\item[{\bf (Step 1)}] Construct the lattice $L$ determined by the exponents associated with the generators of \(I\).
	\item[{\bf (Step 2a)}] Analyse lattice ideals $I_L$ and $I_{\Sat(L)}$; the ideal $I_{\Sat(L)}$ is prime, enabling Step 2b.
	\item[{\bf (Step 2b)}] Construct a matrix \(A\) with \(\ker A = \Sat(L)\).
	\item[{\bf (Step 3)}] Synthesise a linear loop \(\mathcal{L}\) with a diagonal update matrix~$M$ whose diagonal entries are determined by the columns of \(A\).  The ideal \(I_{\Sat(L)}\) is the invariant ideal of \(\mathcal{L}\).
	\item[{\bf Output:}]  Linear loop \(\mathcal{L}\) with invariants~$p_1, \dots, p_k$.
\end{enumerate}
\hrule
\begin{table}[h]
	\begin{threeparttable}
		\caption{Loop synthesis for our running examples}
		\label{tblofex}
		\begin{tabular}{|l||l|l|} \hline
			 & \textbf{\cref{runex1}} & \textbf{\cref{runex2}}\\
			\hline		
			Input: $I$ & $\ideal{x^2-y, x^3-z}$ & $\ideal{x^3y-xy^3}$\\[0.1cm]\hline
			Step 1: $I_L$ & $\ideal{x^2-y, x^3-z}$ & $\ideal{x^2-y^2}$\\[0.1cm]\hline
			Step 2a: $I_{\Sat(L)}$ & $\ideal{x^2-y, x^3-z}$ & $\ideal{x-y}$\\[0.1cm]\hline
			Step 2b: $A$ & $\begin{pmatrix*} 1 & 2 & 3 \end{pmatrix*}$ & $\begin{pmatrix*} 1 & 1 & 1 \end{pmatrix*}$ \\[0.15cm]\hline
			\multirow{4}{*} {Step 3: $\loopie$} & $(x,y,z):=(1,1,1);$ & $(x,y):=(1,1);$ \\
			 & \textbf{while}\ $\star$\ \textbf{do} & \textbf{while}\ $\star$\ \textbf{do} \\ 
			 &  \ $(x, y, z):= (2x,4y,8z);$ & \ $(x, y):= (2x,2y);$\\
			 & \textbf{end while} & \textbf{end while} \\
			\hline
		\end{tabular}
	\end{threeparttable}
\end{table}
We further detail each component of our synthesis procedure,
proving correctness of each step of the synthesis process.

\subsection*{Input} %
The input for the synthesis process is a (possibly empty) finite list $p_1,
\dots, p_k$ of pure difference binomials.
Let $I$ denote the \emph{pure difference} polynomial ideal $I := \ideal{p_1, \dots, p_k}$ in~$\KK[\bm{x}]$ generated by this finite set. 
Our goal is to synthesise a linear loop \(\mathcal{L}\) for which~$I$ is an invariant.

\subsection*{Step 1: Lattice ideal \texorpdfstring{$I_L$}{IL}} %
We start by listing the $k$ exponent vectors $\bm{b}_1, \dots, \bm{b}_k \in \ZZ^d$ of the pure difference binomials $p_1, \dots, p_k$. 
Let $B := \{\bm{b}_1, \dots, \bm{b}_k\}$ and \(L\) the lattice spanned by the exponent vectors \(B\).
If $k=0$, our convention is $B:= \{\bm{0}\}$ and $L = \{\bm{0}\} \subseteq \ZZ^d$. 
In order to meet our objective, we first show that the saturation of the ideal
~$I = \ideal{p_1, \dots, p_k}$ 
with respect to~$\prod_{i=1}^{d}x_i$ 
is precisely the lattice ideal~$I_L$ (\cref{ItoIL}).
In this direction, we begin with an intermediate lemma.
\begin{lemma}\label{IandJsat}
	Let~$I$ be an ideal generated by~$p_1, \dots, p_k$ with exponent vectors constituting the set~$B = \{\bm{b}_1, \dots, \bm{b}_k\}$. 
	Let $J$ be the ideal generated by the canonical binomials of~$B$ (as in \cref{rem:canonical}). 
	Then \[\textstyle \Col{I} = \Col{J}.\]
\end{lemma}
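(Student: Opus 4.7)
My plan is to prove the equality by double inclusion, using the elementary observation that each pure difference binomial $p_i = \bm{x}^{\alpha} - \bm{x}^{\beta}$ with exponent vector $\bm{b}_i = \alpha - \beta$ factors as $p_i = \bm{x}^{\gamma_i} q_i$, where $q_i$ is the canonical binomial associated with $\bm{b}_i$ (as in \cref{rem:canonical}) and $\gamma_i \in \NN^d$ is the componentwise minimum of $\alpha$ and $\beta$. Writing the positive and negative parts of $\bm{b}_i$ as $\bm{b}_{i,+}$ and $\bm{b}_{i,-}$, this identity follows from $\gamma_i = \alpha - \bm{b}_{i,+} = \beta - \bm{b}_{i,-}$; a short componentwise check confirms $\gamma_i \in \NN^d$. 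This factorisation is where all the arithmetic lives.

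For $\Col{I} \subseteq \Col{J}$, I would apply the factorisation to exhibit each generator $p_i$ as the monomial multiple $\bm{x}^{\gamma_i} q_i$ of $q_i \in J$. Hence $I \subseteq J$ and the conclusion follows because the saturation operation is monotone with respect to ideal inclusion.

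For the reverse inclusion, I would show that each canonical generator $q_i$ lies in $\Col{I}$. Since $\bm{x}^{\gamma_i}$ is a product of non-negative powers of $x_1, \dots, x_d$, for any integer $N$ exceeding every component of $\gamma_i$ the monomial $(x_1 \cdots x_d)^N$ is divisible by $\bm{x}^{\gamma_i}$. Therefore $(x_1 \cdots x_d)^N q_i$ is a polynomial multiple of $\bm{x}^{\gamma_i} q_i = p_i$ and lies in $I$. This places $q_i$ in $\Col{I}$, and since $\Col{I}$ is an ideal containing every generator of $J$, we get $J \subseteq \Col{I}$. Taking saturations and using idempotence of saturation yields $\Col{J} \subseteq \Col{I}$.

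The argument is largely mechanical and I do not anticipate any real obstacle; the only thing to get right is the bookkeeping of the monomial shift $\bm{x}^{\gamma_i}$ relating a pure difference binomial to its canonical form. The degenerate case $k = 0$ is handled by the stated convention $B = \{\bm{0}\}$, under which both $I$ and $J$ reduce to $\{0\}$ and the claim becomes trivial (the saturation of $\{0\}$ in the integral domain $\KK[\bm{x}]$ is $\{0\}$).
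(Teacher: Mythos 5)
Your proof is correct and rests on the same key observation as the paper's: each generator $p_i$ is a monomial multiple $\bm{x}^{\gamma_i} q_i$ of its canonical binomial, so $I \subseteq J$ gives one inclusion by monotonicity, and multiplying by a high enough power of $x_1\cdots x_d$ absorbs the monomial factors for the other. The only (cosmetic) difference is that you establish the reverse inclusion generator-wise, showing $q_i \in \Col{I}$ and invoking monotonicity plus idempotence of saturation, whereas the paper manipulates an arbitrary element of $\Col{J}$ directly; both are sound.
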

\begin{proof}
	Since~$I \subseteq J$, one inclusion is straightforward: namely,
	\[\textstyle \Col{I} \subseteq \Col{J}.\]
	In the other direction, consider a polynomial~$g \in \Col{J}$. 
	Let $q_1, \dots, q_k$ be the canonical binomials generating~$J$. 
	Then $g$ admits a decomposition of the form
	\[\textstyle g \cdot \left(\prod_{i=1}^d x_i \right)^{n_1} = f_1 \cdot q_1 + \dots + f_k \cdot q_k\]
	for some polynomials~$f_1, \dots, f_k \in \KK[\bm{x}]$ and some $n_1 \geq 0$. 
	
	Each generator~$p_\ell$ of the ideal~$I$ has the form $p_\ell = m_\ell q_\ell$, where $m_1, \dots, m_k$ are monomials and so
	\begin{align*} \textstyle
		\left(\prod_{i=1}^{d}m_i \right) \cdot g \cdot \left(\prod_{i=1}^d x_i \right)^{n_1} =& \textstyle \sum_{\ell=1}^k  m_\ell q_\ell \cdot \left(f_\ell \cdot\prod_{i \neq \ell}m_i \right) \\ =& \textstyle \sum_{\ell=1}^k p_\ell \cdot \left(f_\ell \cdot\prod_{i \neq \ell}m_i\right) 
	\end{align*}
	lies in \(I\).
 Finally, there exists~$n_2 \geq 0$ such that $\left(\prod_{i=1}^{k}x_i\right)^{n_2}$ is a multiple of the monomial $\prod_{i=1}^{k}m_i$.
 Thus
\[ \textstyle \left(\prod_{i=1}^k x_i \right)^{n_2} \cdot g \cdot \left(\prod_{i=1}^k x_i \right)^{n_1} \in I,\] and hence~$g \in \Col{I}$.
\end{proof}

\begin{theorem}\label{ItoIL}
	Let $I \subseteq \KK[x_1, \dots, x_d]$ be a pure difference ideal and $L \subseteq \ZZ^d$ the lattice spanned by the exponent vectors of its generators.
	Then $\Col{I} = I_L$ and the lattice ideal~$I_L$ is radical.
\end{theorem}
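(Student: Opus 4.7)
The strategy is to reduce to the case of canonical generators and then compare $\Col{J}$ with $I_L$ via two inclusions. By \cref{IandJsat}, we have $\Col{I} = \Col{J}$, where $J$ is the ideal generated by the canonical binomials $q_i = \bm{x}^{\bm{b}_i^+} - \bm{x}^{\bm{b}_i^-}$ associated with the exponent vectors $\bm{b}_1,\ldots,\bm{b}_k$ spanning $L$. Hence it suffices to show $\Col{J} = I_L$ and that $I_L$ is radical.

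For the inclusion $\Col{J} \subseteq I_L$, I would first establish the auxiliary fact that $I_L$ is itself saturated with respect to $\prod_{i=1}^d x_i$, i.e.\ $\Col{I_L} = I_L$. Consider the equivalence relation $\bm{\alpha} \sim_L \bm{\beta} \iff \bm{\alpha} - \bm{\beta} \in L$ on $\NN^d$; a polynomial $g = \sum c_{\bm{\alpha}} \bm{x}^{\bm{\alpha}}$ belongs to $I_L$ precisely when its coefficients sum to zero within each $\sim_L$-equivalence class. Multiplication by $\prod_i x_i$ shifts every exponent uniformly by $(1,\ldots,1)$, which preserves $\sim_L$-classes and the class-wise coefficient sums, and therefore preserves and reflects membership in $I_L$. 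Combined with the trivial inclusion $J \subseteq I_L$ (each $q_i$ has exponent vector $\bm{b}_i \in L$), this yields $\Col{J} \subseteq \Col{I_L} = I_L$.

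For the reverse inclusion $I_L \subseteq \Col{J}$, the main tool is a telescoping argument. Take a binomial generator $\bm{x}^{\bm{\alpha}} - \bm{x}^{\bm{\beta}}$ of $I_L$ and write $\bm{\alpha} - \bm{\beta} = \sum_i n_i \bm{b}_i \in L$ with $n_i\in\ZZ$. Build a lattice walk $\bm{\alpha} = \bm{\gamma}_0, \bm{\gamma}_1, \ldots, \bm{\gamma}_M = \bm{\beta}$ in $\ZZ^d$ in which each increment $\bm{\gamma}_{j-1}-\bm{\gamma}_j$ equals $\pm \bm{b}_{i_j}$ (so $M = \sum_i |n_i|$). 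After multiplying the telescoping identity $\bm{x}^{\bm{\alpha}} - \bm{x}^{\bm{\beta}} = \sum_{j=1}^M (\bm{x}^{\bm{\gamma}_{j-1}} - \bm{x}^{\bm{\gamma}_j})$ by a sufficiently large power of $\prod_i x_i$ so that every intermediate monomial $\bm{x}^{\bm{\gamma}_j}$ has non-negative exponents, each telescoped difference factors as a monomial in $\KK[\bm{x}]$ times $\pm q_{i_j} \in J$. Hence $(\prod_i x_i)^N (\bm{x}^{\bm{\alpha}} - \bm{x}^{\bm{\beta}}) \in J$, giving $\bm{x}^{\bm{\alpha}} - \bm{x}^{\bm{\beta}} \in \Col{J}$.

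For radicality of $I_L$, I would invoke the structure theorem for lattice ideals in characteristic zero (Eisenbud--Sturmfels): since $\KK = \overline{\QQ}$ has characteristic zero, $I_L$ admits a primary decomposition as a finite intersection of prime ideals obtained by twisting the toric ideal $I_{\Sat(L)}$ by characters of the finite torsion quotient $\Sat(L)/L$; a finite intersection of primes is radical. I expect the principal obstacle to be the bookkeeping in the telescoping step: one has to argue simultaneously that $N$ can be chosen uniformly large enough to make \emph{every} $\bm{\gamma}_j$-shifted exponent non-negative, and that the resulting monomial prefactor on each step really is the canonical one, so as to match the chosen generators $q_{i_j}$ of $J$ rather than some other binomial with the same exponent vector.
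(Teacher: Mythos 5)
Your proof is correct and follows the same skeleton as the paper's: reduce to the canonical-binomial ideal $J$ via \cref{IandJsat}, then identify $\Col{J}$ with $I_L$ and deduce radicality. The only difference is that where the paper simply cites \cite[Corollary 3.22 and Theorem 3.23]{herzog2018gtm} for the equality $\Col{J}=I_L$ and for radicality, you prove these facts directly --- the class-wise coefficient-sum characterisation showing $\Col{I_L}=I_L$, the telescoping lattice walk, and the Eisenbud--Sturmfels decomposition of $I_L$ into primes in characteristic zero are all sound and amount to unpacking those citations.
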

\begin{proof}
	Since $B = \{\bm{b}_1, \dots, \bm{b}_k\}$ spans the lattice~$L$, \cite[Corollary 3.22]{herzog2018gtm} implies that $\Col{J} =I_L$. The desired equality $\Col{I} = I_L$ now follows from~\cref{IandJsat}.
	The second assertion follows from~\cite[Theorem 3.23]{herzog2018gtm}.
\end{proof}
Note that $I$ is contained in $I_L$ and, in general,
this inclusion is strict. 
In particular, $I \subsetneq I_L$ in~\cref{runex2}, 
see also~\cref{tblofex}.

\subsection*{Step 2a: Saturated lattice ideal \texorpdfstring{$I_{\Sat(L)}$}{ISat(L)}} %
Before we proceed with Step~2 of our loop synthesis process, we briefly reflect on the lattice ideals that contain~$I$. 
Notice that in Step~1 we not only found one of them, $I_L$, but we also pointed out that its lattice~$L$ had already been computed. 
Indeed, the generating set~$\bm{b}_1, \dots, \bm{b}_k$ of~$L$ is obtained from the pure difference binomials directly. 
While the generating polynomials~$I_L$ can also be computed
(e.g.\ by employing Gröbner bases techniques from~\cite[Section 5]{kesh2009ideals}
 to compute the saturation of~$I$),
our computation proceeds with vectors and lattices rather than polynomials and ideals.

Since \(I_L\) is a radical ideal (\cref{ItoIL}), \(I_L\) admits a unique decomposition (see~\cref{radical-mpd}). 
We adjust a general result by Eisenbud and Sturmfels~\cite{eisenbud1996binomial}, 
which concerns decompositions of binomial ideals,
to our setting.%
\begin{lemma}[{\cite[Corollary 2.5]{eisenbud1996binomial}}]\label{ILtoISatL}
	Let $L \subseteq \ZZ^d$ be a sublattice and $\Sat(L) \subseteq \ZZ^d$ its saturation. 
	The minimal decomposition of ideal \(I_L\) includes the lattice ideal $I_{\Sat(L)}$ so that %
	$I_L = I_{\Sat(L)} \cap J_1 \cap \dots \cap J_\ell$.
	Here $\ell \geq 0$ and $I_{\Sat(L)}, J_1, \dots, J_\ell$ are prime ideals of~$\KK[\bm{x}]$. 
	In particular, $I_{\Sat(L)}$ is toric.
\end{lemma}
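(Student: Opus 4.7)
The plan is to specialise the general binomial-ideal decomposition theorem of Eisenbud and Sturmfels to the lattice-ideal setting. The main ingredients I will need are: (i)~$I_{\Sat(L)}$ is prime; (ii)~$I_L \subseteq I_{\Sat(L)}$; and (iii)~$I_L$ is radical, which is already granted by \cref{ItoIL}, so by \cref{radical-mpd} it admits a unique minimal prime decomposition. Once these are in place, the cited corollary identifies $I_{\Sat(L)}$ as one of the prime components.

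First, I would establish that $I_{\Sat(L)}$ is toric, and hence prime. Because $\Sat(L)$ is saturated in $\ZZ^d$, the quotient $\ZZ^d/\Sat(L)$ is torsion-free and therefore free abelian; choosing any isomorphism $\ZZ^d/\Sat(L) \cong \ZZ^s$ with $s = d - \rank(\Sat(L))$ exhibits $\Sat(L)$ as the kernel of an integer $s\times d$ matrix $A$. By the equivalent description of toric ideals as kernel lattice ideals recalled in \eqref{toric-as-kernel}, we obtain $I_{\Sat(L)} = I_A$, a toric, and therefore prime, ideal.

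Second, since $L \subseteq \Sat(L)$, any generator $\bm{x}^{\bm{\alpha}}-\bm{x}^{\bm{\beta}}$ of $I_L$ satisfies $\bm{\alpha}-\bm{\beta}\in L \subseteq \Sat(L)$, so it belongs to $I_{\Sat(L)}$. Hence $I_L \subseteq I_{\Sat(L)}$, placing $I_{\Sat(L)}$ among the primes containing~$I_L$. Invoking Corollary~2.5 of Eisenbud–Sturmfels in our pure difference setting then yields a decomposition $I_L = I_{\Sat(L)} \cap J_1 \cap \dots \cap J_\ell$ where each $J_i$ is a prime binomial ideal (in fact a toric ideal parametrised by a nontrivial character of the finite group $\Sat(L)/L$, though we do not need this explicit description). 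Because $I_L$ is radical, this intersection is automatically the unique minimal prime decomposition of \cref{radical-mpd}, so none of the components is redundant.

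The main obstacle is the appeal to the cited corollary to pin down $I_{\Sat(L)}$ as a genuine component of the decomposition rather than merely as a prime overideal. A direct argument would require unpacking Eisenbud–Sturmfels' construction of the components via characters of $\Sat(L)/L$ and verifying that the trivial character produces precisely $I_{\Sat(L)}$; happily, this step is subsumed by the cited statement, so no further work is needed here.
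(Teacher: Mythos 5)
Your proposal is correct and follows essentially the same route as the paper, which offers no proof of this lemma beyond the citation of Eisenbud--Sturmfels, Corollary~2.5: like the paper, you delegate the core content (that $I_{\Sat(L)}$ appears as a genuine minimal prime component, via the trivial character on $\Sat(L)/L$) to that corollary. The scaffolding you add---primality of $I_{\Sat(L)}$ via torsion-freeness of $\ZZ^d/\Sat(L)$ and the kernel description \eqref{toric-as-kernel}, the inclusion $I_L \subseteq I_{\Sat(L)}$, and radicality of $I_L$ from \cref{ItoIL} feeding into \cref{radical-mpd}---is consistent with the machinery the paper develops around this lemma and fills in details the paper leaves implicit.
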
 %

As an aside,
work by Grigoriev et al.~\cite{grigoriev2019biomodels} presents an alternative approach to this step.
Therein those authors decompose a so-called \emph{binomial variety}~$V(I_L)$  into a finite union of irreducible varieties.

In the computational part of Step~2 that follows, we compute a \emph{prime lattice ideal}~$I_{\Sat(L)}$ that contains~$I$.
\subsection*{Step 2b: Matrix \texorpdfstring{$A$}{A} encoding \texorpdfstring{$I_{\Sat(L)}$}{ISat(L)}} %
As before, $\{\bm{b}_1, \dots, \bm{b}_k\}$ is a set of vectors spanning~$L$ over~$\ZZ$.
Versions of the next proposition appear in the literature, see e.g.~\cite[Theorem 3.17]{herzog2018gtm}.
\begin{proposition}\label{constructA}
Let~$L \subseteq \ZZ^d$ be a sublattice of rank~$r$.
	If~$r < d$, then there exists an integer $(d-r) \times d$-matrix~$A$ such that $\Sat(L) = \ker{A}$. Otherwise, if~$r = d$, the same holds for $A = \bm{0} \in \ZZ^{1\times d}$.
	Further, there is an effective process to construct the matrix \(A\). 
\end{proposition}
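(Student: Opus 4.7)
The plan is to exhibit $A$ as a matrix representation of the quotient map $\ZZ^d \to \ZZ^d/\Sat(L)$, after first checking that this quotient is a free abelian group of rank $d-r$. Effectiveness will come from the Smith Normal Form of a matrix whose rows are the generators $\bm{b}_1,\dots,\bm{b}_k$ of $L$.

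First I would dispense with the degenerate case $r = d$. Here $L \otimes \QQ = \QQ^d$, so every $\bm{u} \in \ZZ^d$ satisfies $c\bm{u} \in L$ for some nonzero integer $c$, giving $\Sat(L) = \ZZ^d = \ker \bm{0}$. Assume now $r < d$. The key structural claim is that $\ZZ^d / \Sat(L)$ is torsion-free: if $c\bm{u} \in \Sat(L)$ for some nonzero $c \in \ZZ$, then $c' c \bm{u} \in L$ for some nonzero $c'$, so $\bm{u} \in \Sat(L)$ by definition of saturation. Since $\ZZ^d/\Sat(L)$ is finitely generated and torsion-free, it is free abelian. Tensoring the short exact sequence $0 \to \Sat(L) \to \ZZ^d \to \ZZ^d/\Sat(L) \to 0$ with $\QQ$ and noting that $\Sat(L) \otimes \QQ = L \otimes \QQ$ has dimension $r$, we conclude that $\ZZ^d/\Sat(L) \cong \ZZ^{d-r}$. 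Fixing any such isomorphism and composing with the quotient map yields a $\ZZ$-linear surjection $\pi \colon \ZZ^d \twoheadrightarrow \ZZ^{d-r}$. Taking $A \in \ZZ^{(d-r)\times d}$ to be the matrix of $\pi$ with respect to the standard bases, we obtain $\ker A = \Sat(L)$.

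For effectiveness, I would build $A$ explicitly via Smith Normal Form. Form the integer matrix $B \in \ZZ^{k\times d}$ whose rows are $\bm{b}_1,\dots,\bm{b}_k$ and compute unimodular matrices $U \in \GL_k(\ZZ)$ and $V \in \GL_d(\ZZ)$ together with a diagonal matrix $D$ such that $UBV = D$, where the nonzero diagonal entries $d_1 \mid d_2 \mid \dots \mid d_r$ are the invariant factors of $L$. In the new basis of $\ZZ^d$ given by the columns of $V$, the lattice $L$ is $\bigoplus_{i=1}^r d_i \ZZ \bm{e}_i$, and hence its saturation is $\Sat(L) = \bigoplus_{i=1}^r \ZZ \bm{e}_i$. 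In these coordinates the projection onto the last $d-r$ coordinates, given by the matrix $A' := \bigl( \bm{0}_{(d-r)\times r} \,\big|\, I_{d-r} \bigr)$, has kernel exactly $\Sat(L)$. Pulling back through $V$, the matrix $A := A' V^{-1} \in \ZZ^{(d-r)\times d}$ is integer (since $V^{-1}$ is unimodular) and satisfies $\ker A = \Sat(L)$, as required.

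The argument is essentially bookkeeping and I do not expect a serious obstacle: the conceptual part is one paragraph, and the effectiveness reduces to a standard SNF computation. The only point that warrants care is the identification $\Sat(L)\otimes \QQ = L\otimes\QQ$, which ensures the rank of the quotient is $d-r$ rather than something larger; this follows immediately from $L \subseteq \Sat(L) \subseteq (L\otimes\QQ)\cap \ZZ^d$.
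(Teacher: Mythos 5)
Your argument is correct and arrives at the same conclusion by a genuinely different route. The paper stays entirely over $\QQ$: it notes that $\Sat(L) = V \cap \ZZ^d$, where $V$ is the $\QQ$-span of $\bm{b}_1,\dots,\bm{b}_k$, computes a rational basis of $\Null(B)$ for the matrix $B$ with rows $\bm{b}_i$, lets $A$ have those basis vectors as rows so that $\Null(A) = \Row(A)^\perp = \Null(B)^\perp = \Row(B) = V$ by double orthogonal complementation, and finally clears denominators to make $A$ integral. You instead argue integrally: $\ZZ^d/\Sat(L)$ is finitely generated and torsion-free, hence free of rank $d-r$ (using $\Sat(L)\otimes\QQ = L\otimes\QQ$), and $A$ is the matrix of the quotient map, made effective via Smith Normal Form. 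Your route buys a stronger normal form---the resulting $A$ is automatically integral and surjective onto $\ZZ^{d-r}$, with no rescaling step---at the cost of an SNF computation in place of Gaussian elimination over $\QQ$; both are effective, so the proposition's claims all go through. One bookkeeping caveat in your SNF step: with $UBV=D$ and $L$ the \emph{row} span of $B$, the adapted basis of $\ZZ^d$ is given by the rows of $V^{-1}$ (the coordinates of a row vector $\bm{b}$ being $\bm{b}V$), so the pullback acting on column vectors is $A = A'V^{\top}$ rather than $A'V^{-1}$; this does not affect integrality or the structure of the argument, but the formula as written would not have the right kernel in general.
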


	Recall that for each $\bm{y} := (y_1, \dots, y_d) \in\Sat(L)$ there is an integer~$c$ 
	such that $c\cdot \bm{y} \in L$. 
	Since~$L$ consists of integer linear combinations of $\{\bm{b}_1, \dots, \bm{b}_k\}$, 
	a vector of the form $c\cdot \bm{y}$ 
	lies in the $\ZZ$-span of the set~$\{\bm{b}_1, \dots, \bm{b}_k\}$.	
	Let \(V\) denote a vector subspace of~\(\Q^d\) over the field \(\Q\) spanned by $\{\bm{b}_1, \dots, \bm{b}_k\}$.
	Thus elements of \(\Sat(L)\) are the integer vectors in~\(V\).

	\autoref{constructA} follows from the next technical lemma.
	\begin{lemma} \label{claim:nullA}
		There is a computable integer matrix $A$ such that the null space, $\Null(A)=\{v \in \QQ^d : Av = 0\}$, is equal to~$V$.\footnote{Herein we distinguish between a $\QQ$-subspace of vectors orthogonal to the row space of a given integer matrix (commonly its \emph{null space}) and a sublattice of  orthogonal integer vectors (commonly its \emph{kernel}).}
		\end{lemma}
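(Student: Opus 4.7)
The plan is to realise $V$ as a \emph{double} orthogonal complement with respect to the standard bilinear form on $\Q^d$, turning the problem into a routine application of Gaussian elimination. I will use the basic fact that, for a $\Q$-subspace $W\subseteq\Q^d$, the null space of any matrix whose rows form a $\Q$-basis of $W$ equals $W^\perp$, and that $W^{\perp\perp}=W$.

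First, assemble the integer matrix $B \in \Z^{k\times d}$ whose rows are $\bm{b}_1,\ldots,\bm{b}_k$; by construction the row space of $B$ over $\Q$ is precisely $V$. Next, compute a $\Q$-basis $\bm{v}_1,\ldots,\bm{v}_m$ of the null space $\{\bm{v}\in\Q^d : B\bm{v}=\bm{0}\}=V^{\perp}$ by Gaussian elimination over $\Q$; here $m=d-\rank(B)$, with the degenerate case $m=0$ corresponding to $V=\Q^d$ handled by taking $A=\bm{0}\in\Z^{1\times d}$, matching the convention in \cref{constructA}. Each $\bm{v}_j$ has rational coordinates, so by multiplying through by a common denominator we obtain integer vectors $\bm{u}_j\in\Z^d$ with the same $\Q$-span as the $\bm{v}_j$.

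Now define $A\in\Z^{m\times d}$ to be the matrix with rows $\bm{u}_1,\ldots,\bm{u}_m$. By construction the row space of $A$ over $\Q$ is $V^\perp$, and therefore
\begin{equation*}
\Null(A)=\{\bm{v}\in\Q^d : A\bm{v}=\bm{0}\}=(V^\perp)^\perp=V,
\end{equation*}
which is the required equality. Each step---forming $B$, running Gaussian elimination, clearing denominators---is effective, so $A$ is computable from the input $\{\bm{b}_1,\ldots,\bm{b}_k\}$.

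There is no real obstacle here; the only point that warrants care is the distinction flagged in the footnote between the rational null space $\Null(A)\subseteq\Q^d$ and the integer kernel $\ker A\subseteq\Z^d$ used elsewhere in the paper. The lemma concerns only the former, so no saturation argument is required at this stage: saturation enters afterwards, when \cref{constructA} identifies $\Sat(L)$ with $\ker A = \Null(A)\cap\Z^d$ via the observation that every element of $\Sat(L)$ lies in $V$ and is integral.
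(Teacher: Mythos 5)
Your proposal is correct and follows essentially the same route as the paper's proof: form the integer matrix $B$ with rows $\bm{b}_1,\ldots,\bm{b}_k$, take a basis of its rational null space $V^\perp$ as the rows of $A$, clear denominators, and conclude via the double orthogonal complement $\Null(A)=(V^\perp)^\perp=V$. The only cosmetic differences are that you make the Gaussian-elimination step explicit for computability and clear denominators per basis vector rather than scaling the whole matrix at the end.
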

	\begin{proof}
	Let $B \in \ZZ^{k \times d}$ be the matrix with rows $\bm{b}_1, \dots, \bm{b}_k$. 
	The row space \(\Row(B)\) over \(\Q\) is thus equal to \(V\).
	The orthogonal complement of~$\Row(B)$ is $\Row(B)^\bot = \Null(B)$. 
	It follows that~$\Null(B)$ has a basis of $d-r$ linearly independent vectors, which we denote by~$\bm{a}_1, \dots, \bm{a}_{d-r}$, provided that $r < d$.
	On the other hand, if $r=d$, then \(V\) is not a proper subspace of \(\QQ^d\), from which it follows that $\Null(B) = \{\bm{0}\}$.
	Let \(A\) be the matrix with rows~$\bm{a}_1, \dots, \bm{a}_{d-r}$ (and defined as a single row vector $\bm{0}$ in the case $r=d$).	
	Then, by definition, $\Row(A) = \Null(B)$.
	The result follows by elementary properties of orthogonal decomposition in finite-dimensional vector spaces.
	Indeed, we have
	\begin{equation*}
		\Null(A) = \Row(A)^\bot = \Null(B)^\bot = (\Row(B)^\bot)^\bot = \Row(B) = V.
	\end{equation*}%
	
Generally speaking, the matrix $A$ constructed above has rational entries; however, multiplying the entries of a matrix by a scalar does not change the null space.
Thus we can assume, without loss of generality, that $A$ has integer entries and preserves the property $\Null(A) = V$, as desired.
\end{proof}
	\begin{proof}[Proof of \autoref{constructA}]
	There are two statements to prove in \autoref{constructA} depending on the cases \(r<d\) or \(r=d\).
	Note that the latter statement, that \(A=\bm{0}\in\Z^{1\times d}\) if \(r=d\), was dealt with in the proof of \cref{claim:nullA}.
	Thus all that remains it to establish the former statement.
	To this end, we show that \(\Sat(L)=\ker A\) for the matrix \(A\) constructed in the proof of \cref{claim:nullA}.
	We note that \(\Sat(L)\) is equal to
\begin{multline*}
		V \cap \ZZ^d = \Null(A) \cap \ZZ^d = \{v \in \QQ^d : Av = 0\} \cap \ZZ^d  \hfill \\ \hfill = \{v \in \ZZ^d : Av = 0 \} = \ker{A},
	\end{multline*}
	as desired.
	\end{proof}

\subsection*{Step 3: Synthesise linear loop \texorpdfstring{$\mathcal{L}$}{L}} %

We take a brief pause and reflect on the combination of \cref{ILtoISatL,constructA} together with the formal definitions presented in \cref{sec:preliminaries}.
The sum total is a threefold equivalence between i) toric ideals, ii) ideals of saturated lattices, and iii) ideals of lattices presented as~$\ker{A}$.
Our objective, to synthesise a loop with prescribed polynomial invariants, relies on the constructive aspect of each of the preceding steps.
In this final step,
we will use the matrix~\(A\), constructed in \cref{constructA},
to synthesise a linear loop with invariant ideal~\(I_{\Sat(L)}\).

In the proof of \autoref{toric-synt}, we will employ the following result due to Kauers and Zimmermann \cite[Proposition 5]{kauers2008algrel} concerning algebraic relations among C-finite sequences.
\begin{proposition} \label{prop:algrel}
The ideal over \(\overline{\Q}[x_0,x_1,\ldots, x_d]\) associated with the algebraic relations among the \(d+1\) bi-infinite C-finite sequences 
	\begin{equation*}
		\seq{n}{n\in\Z},\, \seq{\lambda_1^n}{n\in\Z},\,\ldots,\, \seq{\lambda_d^n}{n\in\Z}
	\end{equation*}
is equal to the lattice ideal of the exponent lattice of \(\lambda_1,\ldots, \lambda_d\).
\end{proposition}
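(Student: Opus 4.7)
The plan is to establish the equality $J = I_L$ by double containment, where $J \subseteq \overline{\Q}[x_0, x_1, \ldots, x_d]$ denotes the ideal of algebraic relations among the specified sequences and $L := \Lexp{\lambda_1, \ldots, \lambda_d}$. The lattice ideal $I_L$ is, by definition, generated by pure difference binomials $\bm{x}^{\bm{\alpha}} - \bm{x}^{\bm{\beta}}$ with $\bm{\alpha} - \bm{\beta} \in L$; we view it inside the larger polynomial ring in which $x_0$ simply does not appear in the generators.

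The inclusion $I_L \subseteq J$ is the easy direction: for any generator $\bm{x}^{\bm{\alpha}} - \bm{x}^{\bm{\beta}}$ with $\bm{\alpha} - \bm{\beta} \in L$, one has $\prod_i \lambda_i^{\alpha_i - \beta_i} = 1$, so raising to the $n$th power gives $\prod_i \lambda_i^{n\alpha_i} = \prod_i \lambda_i^{n\beta_i}$ for every $n \in \Z$, whence substituting $x_0 \mapsto n$ and $x_i \mapsto \lambda_i^n$ yields zero.

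For the reverse inclusion $J \subseteq I_L$, I would take an arbitrary $P \in J$ and expand it as
\[
P(x_0, x_1, \ldots, x_d) = \sum_{\bm{\alpha} \in \N^d,\, k \in \N} c_{\bm{\alpha}, k}\, x_0^k\, \bm{x}^{\bm{\alpha}},
\]
so that the assumption $P \in J$ becomes $\sum_{\bm{\alpha},k} c_{\bm{\alpha},k}\, n^k\, \mu_{\bm{\alpha}}^n = 0$ for all $n \in \Z$, where $\mu_{\bm{\alpha}} := \prod_i \lambda_i^{\alpha_i}$. Two exponents $\bm{\alpha}, \bm{\alpha}'$ yield the same $\mu$ precisely when $\bm{\alpha} - \bm{\alpha}' \in L$, so I would partition $\N^d$ into cosets $S_\mu$ indexed by the distinct values $\mu$ taken by $\mu_{\bm{\alpha}}$ and rewrite the sum as $\sum_\mu \bigl(\sum_k \bigl(\sum_{\bm{\alpha} \in S_\mu} c_{\bm{\alpha},k}\bigr)\, n^k \bigr) \mu^n = 0$. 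Appealing to the linear independence over $\overline{\Q}$ of the functions $\{n \mapsto n^k \mu^n\}_{k \in \N,\, \mu \in \overline{\Q}^*}$ with distinct $\mu$'s, each inner polynomial in $n$ must vanish identically, so $\sum_{\bm{\alpha} \in S_\mu} c_{\bm{\alpha},k} = 0$ for every class $S_\mu$ and every $k$. Fixing a representative $\bm{\alpha}_0(\mu) \in S_\mu$ and writing
\[
\sum_{\bm{\alpha} \in S_\mu} c_{\bm{\alpha},k}\, \bm{x}^{\bm{\alpha}} = \sum_{\bm{\alpha} \in S_\mu} c_{\bm{\alpha},k}\, (\bm{x}^{\bm{\alpha}} - \bm{x}^{\bm{\alpha}_0(\mu)}) + \Bigl(\sum_{\bm{\alpha} \in S_\mu} c_{\bm{\alpha},k}\Bigr) \bm{x}^{\bm{\alpha}_0(\mu)}
\]
kills the second term while exhibiting the first as an $\overline{\Q}$-linear combination of generators of $I_L$; multiplying by $x_0^k$ and summing over $k, \mu$ recovers $P$, proving $P \in I_L$.

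The main obstacle is the linear independence fact used to split the relation by coset, namely that $\sum_{j=1}^r p_j(n) \mu_j^n = 0$ for all $n \in \Z$ with pairwise distinct $\mu_j \in \overline{\Q}^*$ and polynomials $p_j \in \overline{\Q}[n]$ forces each $p_j \equiv 0$. I would prove this by induction on $r$: the shift $n \mapsto n+1$ transforms the relation into $\sum_j p_j(n+1)\mu_j \cdot \mu_j^n = 0$, so subtracting $\mu_r$ times the original relation eliminates the $r$th term and replaces each remaining $p_j$ by the nonzero polynomial $p_j(n+1)\mu_j - p_j(n)\mu_r$ (of the same degree as $p_j$, since $\mu_j \neq \mu_r$), allowing the inductive hypothesis to conclude the proof.
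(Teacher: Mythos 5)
The paper does not actually prove this proposition: it is imported verbatim as Proposition~5 of Kauers and Zimmermann, so your self-contained argument does genuinely more work than the source text. Your overall strategy is sound and is essentially the standard proof one would expect behind that citation. The inclusion $I_L \subseteq J$ is immediate from the definition of the exponent lattice; for the converse you correctly reduce membership in $I_L$ to the vanishing of the coset sums $\sum_{\bm{\alpha}\in S_\mu} c_{\bm{\alpha},k}$, using the observation that $\mu_{\bm{\alpha}} = \mu_{\bm{\alpha}'}$ exactly when $\bm{\alpha}-\bm{\alpha}'\in\Lexp{\lambda_1,\ldots,\lambda_d}$, and the telescoping rewrite against a fixed representative $\bm{x}^{\bm{\alpha}_0(\mu)}$ exhibits each coset block as a combination of lattice-ideal generators. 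The handling of the extra variable $x_0$ (coefficient extraction in the monomial basis $n^k$ once the exponential parts are separated) is also correct.

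The one step that does not work as written is the induction proving the linear-independence lemma. Applying $f(n)\mapsto f(n+1)-\mu_r f(n)$ to $\sum_j p_j(n)\mu_j^n=0$ turns the $r$th coefficient into $\mu_r\bigl(p_r(n+1)-p_r(n)\bigr)$, which is \emph{not} identically zero unless $p_r$ is constant; the operation only lowers $\deg p_r$ by one. You noticed the right phenomenon for $j\neq r$ --- the leading coefficient is multiplied by $\mu_j-\mu_r\neq 0$, so those degrees are preserved --- but the same computation at $j=r$ gives degree reduction, not elimination. The repair is routine: iterate the operation $\deg p_r + 1$ times (or run a secondary induction on $\deg p_r$); after the $r$th term disappears the surviving coefficients are still nonzero whenever the original $p_j$ were, so the induction on $r$ forces $p_j\equiv 0$ for $j<r$, and then $p_r(n)\mu_r^n\equiv 0$ with $\mu_r\neq 0$ gives $p_r\equiv 0$. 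With that adjustment the proof is complete.
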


We now turn to our main result in this step.
\begin{theorem}\label{toric-synt}
	Let $p_1, \dots, p_k \in \KK[x_1, \dots, x_d]$ be pure difference binomials and let~$L \subset \ZZ^d$ be the lattice spanned by their exponent vectors. There exists a linear loop~$\loopie$ with a diagonal update matrix $M\in\QQ^{d\times d}$ such that $I_{\Sat(L)}$ is the invariant ideal of~$\loopie$.
\end{theorem}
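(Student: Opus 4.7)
The plan is to construct $\loopie$ explicitly from the matrix $A$ provided by \cref{constructA}, and then identify its invariant ideal with $I_{\Sat(L)}$ via \cref{prop:algrel}. Take $A = (a_{ji}) \in \ZZ^{s \times d}$ with $\ker A = \Sat(L)$, so that by~\eqref{toric-as-kernel} we already know $I_A = I_{\Sat(L)}$. Choose $s$ distinct primes $z_1, \ldots, z_s \in \ZZ$ and set
\begin{equation*}
\lambda_i := \prod_{j=1}^{s} z_j^{a_{ji}} \in \QQ^*, \qquad i = 1, \ldots, d,
\end{equation*}
so that $(\lambda_1, \ldots, \lambda_d) = \varphi_A(z_1, \ldots, z_s)$. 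Take $\bm{s} := (1, \ldots, 1)^\top$ and $M := \diag{\lambda_1, \ldots, \lambda_d}$; the resulting diagonal linear loop satisfies $x_i(n) = \lambda_i^n$ for every $i$ and $n$. The degenerate case $r = d$, where $A = \bm{0}$ and all $\lambda_i = 1$, must be handled separately: then $\Sat(L) = \ZZ^d$ and $I_{\ZZ^d}$ is the ideal of pure difference binomials vanishing at $(1, \ldots, 1)$, which is the invariant ideal of the constant loop with orbit $\{(1, \ldots, 1)\}$.

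The core computation is to show that $\Lexp{\lambda_1, \ldots, \lambda_d} = \Sat(L)$. A vector $\bm{n} = (n_1, \ldots, n_d) \in \ZZ^d$ lies in the exponent lattice of $\{\lambda_i\}$ iff $\prod_i \lambda_i^{n_i} = 1$, which rewrites (after swapping the order of products) as $\prod_{j} z_j^{(A\bm{n})_j} = 1$. Since distinct primes are multiplicatively independent in $\QQ^*$, this is equivalent to $A\bm{n} = \bm{0}$, i.e.\ $\bm{n} \in \ker A = \Sat(L)$. Having matched the exponent lattice, an application of \cref{prop:algrel} (ignoring the auxiliary sequence $\seq{n}{n}$, as the ideal it returns lies entirely in $\KK[x_1, \ldots, x_d]$) shows that the ideal of algebraic relations among $\seq{\lambda_1^n}{n}, \ldots, \seq{\lambda_d^n}{n}$ is exactly the lattice ideal of $\Sat(L)$, namely $I_{\Sat(L)}$. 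By construction this is the invariant ideal of $\loopie$, completing the proof.

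The main obstacle I anticipate is guaranteeing that the exponent lattice of the synthesised multipliers is \emph{precisely} $\Sat(L)$, and not enlarged by accidental multiplicative relations among the $z_j$. Distinct primes sidestep this cleanly because of unique factorisation in $\ZZ$; more generally any $\ZZ$-linearly independent subset of the torsion-free group $\QQ^* / \{\pm 1\}$ would serve. A secondary subtlety is the mismatch between the bi-infinite indexing in \cref{prop:algrel} and the $n \geq 0$ indexing of a loop, but this is harmless: a polynomial expression in $\lambda_i^n$ is itself a C-finite sequence, and a C-finite sequence vanishing for all $n \geq 0$ vanishes for all $n \in \ZZ$, so the two ideals of algebraic relations coincide.
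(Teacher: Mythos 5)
Your proposal is correct and follows essentially the same route as the paper's proof: construct $A$ with $\ker A = \Sat(L)$ via \cref{constructA}, evaluate the column monomials at distinct primes to get the diagonal entries $\lambda_i$, show $\Lexp{\lambda_1,\dots,\lambda_d}=\ker A=\Sat(L)$ by multiplicative independence of primes, and invoke \cref{prop:algrel} to identify the invariant ideal of the loop started at $(1,\dots,1)$ with $I_{\Sat(L)}$. Your explicit justification of the passage from bi-infinite to $\NN$-indexed sequences (a C-finite sequence vanishing on $\NN$ vanishes on $\ZZ$) fills in a step the paper dispatches with ``mutatis mutandis,'' which is a welcome addition.
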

\begin{proof}
	Following \cref{constructA}, it remains to prove that the ideal of a saturated lattice~$\Sat(L)$ can be realised as the ideal of all polynomial invariants of a linear loop. 
	Our proof adapts the argument in \cite[Proposition 3.7]{galuppi2021toric} due to Galuppi and Stanojkovski.

	First, there exists an integer matrix~$A \in \ZZ^{s \times d}$ such that~$\ker{A} = \Sat(L)$ (by~\cref{constructA}). 
	As before (cf.~\cref{sec:preliminaries}), let $\{\bm{a}_1, \dots, \bm{a}_d\}$ be the set of column vectors of \(A\). 
	Now let $p_1, \dots, p_s$ be the first $s$ prime numbers and
	define~$\lambda_j:= p_1^{a_{1j}}\cdots p_s^{a_{sj}}$.
	Note that for each $j \in \{1, \dots, d\}$,  $\lambda_j$ is the evaluation of the monomial $\bm{x}^{\bm{a}_j}$ at~$\bm{p} = (p_1, \ldots, p_s)$. 
	
	Since there are no non-trivial multiplicative relations among pairwise distinct primes, 
	a vector $\bm{n}:= (n_1, \dots, n_d)$ is a member of the exponent lattice of~$\{\lambda_1, \dots, \lambda_d\}$ if and only if \(\bm{n}\) simultaneously achieves unity; i.e., $p_i^{n_1a_{i1} + \dots + n_da_{id}}=1$ for each \(i\in\{1,\ldots, s\}\). 
	Specifically, $(n_1, \dots, n_d) \in \Lexp{\lambda_1, \dots, \lambda_d}$ if and only if for each~$i \in \{1, \dots, s\}$ we have that $n_1a_{i1} + \dots + n_da_{id} = 0$. 
	Thus it follows, by definition, that $\Lexp{\lambda_1, \dots, \lambda_d} = \ker{A} = \Sat(L)$.
	
	Finally, let $J$ be the ideal generated by the algebraic relations among the C-finite sequences $\seq{\lambda_1^n}{n \in \NN}$, \dots, $\seq{\lambda_d^n}{n \in \NN}$. 
	\emph{Mutatis Mutandis,} the argument in \autoref{prop:algrel} holds for natural-indexed sequences.
	Taken in combination with the above argument, it follows that
	$J = I_{\Sat(L)}$.
	
	The polynomial invariants of a linear loop~$\loopie$ with initial vector $(1, \dots, 1) \in \QQ^d$ and update matrix~$M = \diag{\lambda_1, \dots, \lambda_d}$ are exactly those in the ideal~$J$, which concludes the proof.
\end{proof}

\subsection*{Output}

From our starting point of a pure difference ideal~$I$, we constructed an integer matrix~$A$ such that $I_{\Sat(L)}$ is a toric ideal associated with~$A$, and $I \subseteq I_{\Sat(L)}$.

We output the linear loop \(\mathcal{L}\)
with the invariant ideal~\(I_{\Sat(L)}\).
Since $I \subseteq I_{\Sat(L)}$,
the loop \(\mathcal{L}\)
meets our objective: each of the polynomials in \(I\) is invariant under the action of \(\mathcal{L}\).
Correctness of the procedure
 follows from \cref{constructA,toric-synt}.

 The following corollary summarises our synthesis procedure.
\begin{corollary}\label{worrell}
	Let $p_1, \dots, p_k \in \KK[x_1, \dots, x_d]$ be a set of pure difference binomials and $I = \ideal{p_1, \dots, p_k}$. 
	There exists a linear loop~$\loopie$ with a diagonal update matrix $M\in\QQ^{d\times d}$ such that any polynomial $p \in I = \ideal{p_1, \dots, p_k}$ is an invariant of~$\loopie$. 
	There is a procedure to effectively construct loop  \(\mathcal{L}\).
\end{corollary}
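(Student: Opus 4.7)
The plan is to assemble the pieces developed earlier in the section into a single procedure that certifies each polynomial in $I$ as an invariant of the synthesised loop. The approach is direct: I shall show that $I$ is contained in a saturated lattice ideal $I_{\Sat(L)}$ that can be realised as the invariant ideal of a linear loop with diagonal update matrix, so that \emph{every} polynomial in $I$ is automatically annihilated along the loop's orbit.

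First, I would extract from the input binomials $p_1, \ldots, p_k$ their exponent vectors $\bm{b}_1, \ldots, \bm{b}_k \in \ZZ^d$ and form the lattice $L := \langle \bm{b}_1, \ldots, \bm{b}_k \rangle_\ZZ \subseteq \ZZ^d$. By \cref{ItoIL}, the saturation of $I$ with respect to $\prod_{i=1}^d x_i$ equals the lattice ideal $I_L$; in particular $I \subseteq I_L$. Next I would pass to the saturated lattice $\Sat(L)$: by \cref{ILtoISatL}, the ideal $I_{\Sat(L)}$ appears as one of the prime components in the minimal decomposition $I_L = I_{\Sat(L)} \cap J_1 \cap \cdots \cap J_\ell$, so in particular $I_L \subseteq I_{\Sat(L)}$. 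Chaining these inclusions gives $I \subseteq I_{\Sat(L)}$.

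Next I would apply \cref{constructA} to effectively construct an integer matrix $A$ with $\ker A = \Sat(L)$. Feeding this $A$ into \cref{toric-synt} then produces a linear loop $\loopie$ with diagonal update matrix $M = \diag{\lambda_1, \ldots, \lambda_d}$, where each $\lambda_j$ is the product of the first $s$ primes raised to the entries of the $j$th column of $A$, and whose invariant ideal is precisely $I_{\Sat(L)}$. Since $I \subseteq I_{\Sat(L)}$, every polynomial $p \in I$ satisfies $p(x_1(n), \ldots, x_d(n)) = 0$ for all $n \geq 0$ along the orbit of $\loopie$; that is, each $p \in I$ is an invariant of $\loopie$, as required.

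I expect no genuine obstacle here: the heavy lifting has already been done in \cref{ItoIL,ILtoISatL,constructA,toric-synt}. The only point that warrants care is book-keeping the chain $I \subseteq I_L \subseteq I_{\Sat(L)}$ and noting that each of the four preceding results is constructive, so that their composition yields an effective procedure producing $\loopie$. The initial vector can be taken to be $(1, \ldots, 1) \in \QQ^d$, as in \cref{toric-synt}, and correctness of each step has already been verified, yielding the stated corollary.
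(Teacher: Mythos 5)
Your proposal is correct and follows the paper's own argument exactly: the chain $I \subseteq I_L \subseteq I_{\Sat(L)}$ via \cref{ItoIL} and \cref{ILtoISatL}, the construction of $A$ via \cref{constructA}, and the realisation of $I_{\Sat(L)}$ as the invariant ideal of a diagonal linear loop via \cref{toric-synt} is precisely how the paper justifies the corollary in its ``Output'' paragraph. Nothing is missing, and the effectiveness claim is correctly traced to the constructive nature of each cited step.
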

\section{Transformed pure difference ideals} \label{sec:to-bin}
The procedure in~\cref{sec:binomial}  synthesises linear loops from ideals generated by pure difference binomials. 
In this section, we demonstrate that, subject to certain assumptions
on the input, we can also synthesise linear loops for non-binomial ideals.%

We call a polynomial ideal \(\ideal{p_1,\ldots, p_k}\in\K[\bm{x}]\) a \emph{transformed pure difference ideal} if there exists an invertible linear change of coordinates for which each \(p_j\) in the generating set is a pure difference binomial after the coordinate change.
Here, by an \emph{invertible linear change of coordinates} we mean that the associated change-of-basis matrix \(S\) is an element of \(\GL_{d}(\KK)\).
\begin{theorem}\label{move-to-bin}
	Let \(I\) be a transformed pure difference ideal with associated change-of-basis matrix \(S\in\GL_{d}(\K)\).
	Then there exists a computable linear loop \(\loopie\) such that \(I\) is an
        invariant of \(\loopie\).
\end{theorem}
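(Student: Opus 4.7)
The natural plan is to reduce the theorem directly to \cref{worrell} by passing through the coordinate change $S$, solving the problem in the ``binomial'' coordinates, and then pushing the synthesised loop back. Concretely, set $\bm{y} := S^{-1}\bm{x}$ so that substituting $\bm{x} = S\bm{y}$ into each generator $p_j$ yields a polynomial $q_j(\bm{y}) := p_j(S\bm{y})$; by the hypothesis that $I$ is a transformed pure difference ideal, each $q_j$ is a pure difference binomial. The ideal $I' := \ideal{q_1,\dots,q_k}\subseteq\KK[\bm{y}]$ is therefore a pure difference ideal to which the machinery of \cref{sec:binomial} applies.

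Applying \cref{worrell} to $I'$, I obtain a linear loop $\loopie'$ in the variables $\bm{y}$ with initial vector $(1,\dots,1)^\top$ and diagonal update matrix $M'=\diag{\lambda_1,\dots,\lambda_d}$ such that every $q_j$ vanishes along the orbit $\seq[\infty]{\bm{y}(n)}{n=0}$. I then define the candidate loop $\loopie$ in the original coordinates by transporting $\loopie'$ through $S$: take the update matrix $M := S M' S^{-1}$ and the initial vector $\bm{s}:= S\cdot(1,\dots,1)^\top$. A short induction on $n$ shows that the orbits satisfy $\bm{x}(n) = S\bm{y}(n)$ for all $n\geq 0$, so for each generator
\begin{equation*}
    p_j(\bm{x}(n)) \;=\; p_j(S\bm{y}(n)) \;=\; q_j(\bm{y}(n)) \;=\; 0.
\end{equation*}
Because the polynomial invariants of $\loopie$ form an ideal, the entire ideal $I=\ideal{p_1,\dots,p_k}$ is contained in the invariant ideal of $\loopie$.

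Computability follows by chaining effective steps: the binomials $q_j$ are obtained by polynomial substitution through $S$; the loop $\loopie'$ is produced by the effective procedure of \cref{sec:binomial}; and $M$, $\bm{s}$ are obtained from $S$, $S^{-1}$, $M'$ by standard matrix arithmetic over $\KK=\alg$. The two mild conceptual points to keep in mind are (i) that the resulting update matrix $M = S M' S^{-1}$ will typically not be diagonal, which is perfectly acceptable for a linear loop, and (ii) that, in contrast to the pure difference case, the containment $I \subseteq \Inv(\loopie)$ need not be an equality: the pullback through $S$ of the larger invariant ideal $I_{\Sat(L)}$ furnished by \cref{toric-synt} may strictly exceed $I$. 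The main obstacle, and the only place where the proof depends on the hypothesis, is the very existence of a coordinate change making all generators pure difference binomials simultaneously; once $S$ is given, the transport argument above is purely mechanical.
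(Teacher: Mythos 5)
Your proposal is correct and follows essentially the same route as the paper: synthesise the diagonal loop for the pure difference ideal in the transformed coordinates via \cref{worrell}, then conjugate the update matrix and transform the initial vector back through $S$ (your $q_j(\bm{y}) := p_j(S\bm{y})$ and $M = SM'S^{-1}$ are the paper's convention with $S$ and $S^{-1}$ interchanged, an immaterial difference). Your closing remarks, that the resulting matrix is no longer diagonal and that the containment of $I$ in the invariant ideal may be strict, also match the paper's discussion.
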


	We take \(I=\ideal{p_1,\ldots, p_k}\) and \(S\) as above. 
	Further, let  $J := \ideal{q_1, \dots, q_k}$ be the associated pure difference ideal after the change of coordinates \(S\).
	The synthesis procedure summarised in \cref{worrell} outputs a loop \(\loopie'\) for which the pure difference ideal~$J = \ideal{q_1, \dots, q_k}$ is invariant. 
	More specifically, $\loopie'$ is the loop with update assignments described by a diagonal update matrix~$M$ and initial vector~$\bm{v}_{\text{in}} = (1, \dots, 1)$
	such that each polynomial in ideal \(J\) is an invariant of
        \(\loopie'\).

	The proof of \autoref{move-to-bin} is as an immediate corollary of the next lemma, which outputs the desired linear loop \(\mathcal{L}\).
	\begin{lemma} \label{claim:transformed}
	Let $\loopie$ be the linear loop with update assignment matrix~$S^{-1}MS$ and initial vector~$S^{-1}\bm{v}_{\text{in}}$. 
	Then every polynomial in \(I\) is an invariant of~$\loopie$.
\end{lemma}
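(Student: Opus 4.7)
The plan is to exploit the conjugation identity $(S^{-1}MS)^n = S^{-1} M^n S$ to transport the invariant property from $\loopie'$ to $\loopie$. The setup before the lemma already provides the loop $\loopie'$ with diagonal update matrix $M$ and initial vector $\bm{v}_{\text{in}} = (1,\ldots,1)$ whose invariant ideal is the pure difference ideal $J = \ideal{q_1,\ldots,q_k}$; the only new ingredient is to verify that after conjugating by $S$ we obtain a loop whose invariants include the original generators $p_1,\ldots,p_k$ of $I$.

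First, I would compute the orbit of $\loopie$. Denoting by $\bm{x}(n)$ the state of $\loopie$ after $n$ iterations and by $\bm{x}'(n) = M^n \bm{v}_{\text{in}}$ the orbit of $\loopie'$, an easy induction (or the binomial-style telescoping of the product $(S^{-1}MS)(S^{-1}MS)\cdots(S^{-1}MS)$) yields
\begin{equation*}
\bm{x}(n) = (S^{-1}MS)^n \cdot S^{-1}\bm{v}_{\text{in}} = S^{-1} M^n S \cdot S^{-1}\bm{v}_{\text{in}} = S^{-1} \bm{x}'(n).
\end{equation*}
Equivalently, $S\bm{x}(n) = \bm{x}'(n)$ for every $n \geq 0$, so the orbit of $\loopie$ is the image of the orbit of $\loopie'$ under $S^{-1}$.

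Next, I would formalise how the change of coordinates $S$ connects the generators $p_i$ and $q_i$. By the hypothesis that $I$ is a transformed pure difference ideal with change-of-basis matrix $S$, the pure difference binomial $q_i$ is obtained from $p_i$ by substituting $\bm{x} = S^{-1}\bm{y}$; that is, $p_i(S^{-1}\bm{y}) = q_i(\bm{y})$, or equivalently $p_i(\bm{x}) = q_i(S\bm{x})$ as polynomial identities in $\K[\bm{x}]$. Combining this identity with the orbit relation established above gives
\begin{equation*}
p_i(\bm{x}(n)) = q_i(S\bm{x}(n)) = q_i(\bm{x}'(n)) = 0,
\end{equation*}
where the last equality uses that $q_i \in J$ is an invariant of $\loopie'$.

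This shows that each generator $p_i$ vanishes on the orbit of $\loopie$; since the set of polynomials vanishing on the orbit is an ideal, every $p \in I = \ideal{p_1,\ldots,p_k}$ is likewise an invariant of $\loopie$. The only subtle point (and the main obstacle, such as it is) is nailing down the direction of the substitution conventions for $S$ versus $S^{-1}$, since both $\loopie'$-to-$\loopie$ and $\bm{x}$-to-$\bm{y}$ involve $S$; once one checks that these conventions match, the argument reduces to the single computation $p_i(\bm{x}(n)) = q_i(\bm{x}'(n))$ displayed above. Theorem~\ref{move-to-bin} then follows by taking $\loopie$ to be exactly the loop constructed here.
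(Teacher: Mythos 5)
Your proof is correct and follows essentially the same route as the paper: both transport the orbit of $\loopie'$ to that of $\loopie$ via the conjugation identity $(S^{-1}MS)^n = S^{-1}M^nS$ and then use the coordinate correspondence between the $p_i$ and the $q_i$. The only cosmetic difference is that the paper phrases the correspondence at the level of varieties, $\bm{a} \in \bigcap_i V(p_i) \Leftrightarrow S\bm{a} \in \bigcap_i V(q_i)$, whereas you write the equivalent polynomial identity $p_i(\bm{x}) = q_i(S\bm{x})$ explicitly.
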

		\begin{proof}%
	Let $\bm{e}_1$, \dots, $\bm{e}_d$ be the standard basis of~$\KK^d$. 
	After the change of coordinates \(S\),
	the new basis vectors are $S^{-1}\bm{e}_1$, \dots, $S^{-1}\bm{e}_d$ and
	 the coordinates of~$\bm{a} = x_1\bm{e}_1 + x_d\bm{e}_d$ are $S\bm{a} = (x'_1, \dots, x'_d)$. 
		 Clearly, \[
	\bm{a} \in \bigcap_{i=1}^{k} V(p_i) 
	\Leftrightarrow
	S\bm{a} \in \bigcap_{i=1}^{k} V(q_i).
	\]
	Hence, $\bm{v}_\text{in} \in \bigcap_{i=1}^{k} V(q_i)$ implies that
	$S^{-1}\bm{v}_\text{in} \in \bigcap_{i=1}^{k} V(p_i)$. 
	Moreover, the matrix $S^{-1}MS$ encodes the linear update
        assignments of~$\loopie'$ in the standard basis; thus  $\left(S^{-1}MS\right)^n S^{-1}\bm{v}_\text{in} \in \bigcap_{i=1}^{k} V(p_i)$ for all $n\geq 0$.
	Hence, each polynomial in \(I\) is an invariant of \(\loopie\).
	\end{proof}
\begin{example}\label{ex-not-bin}
	Let~$I = \ideal{p_1, p_2}$ be the ideal of~$\KK[x,y,z]$ generated by $p_1 = 4y^2+y-x$ and $p_2= 8y^3-x+z$. The change of coordinates described by the invertible matrix
	\(S:= \biggl(\begin{smallmatrix}
		0 & 2 & 0\\
		1 & -1 & 0\\
		1& 0 & -1 
	\end{smallmatrix}\biggr)\)
	yields $q_1 = (x')^2-y'$, $q_2 = (x')^3 - z'$. 
	Thus \(I\) is a transformed pure difference ideal and, after a change of coordinates, we have arrived at the ideal in \cref{runex1}.
	In the new coordinate system, we synthesise a loop~$\loopie'$ (\cref{worrell,subfig:loop1}) for which each polynomial in the pure difference ideal \(\ideal{q_1,q_2}\) is invariant. 
	In the old coordinate system, we proceed as in \cref{move-to-bin}.
	Our procedure synthesises the loop~$\loopie$ with update matrix~$S^{-1} \cdot \diag{2,4,8}\cdot S$ and initial vector~$S^{-1}\cdot(1,1,1)^T$ as follows:%
	\begin{algorithmic}
		\STATE $(x, y, z):= (3/2, 1/2, 1/2);$
		\WHILE{$\star$}
		\STATE $x:=4x-2y;$
		\STATE $y:=2y;$
		\STATE $z:=-4x-2y+8z;$
		\ENDWHILE
	\end{algorithmic}
	By construction, both polynomials~$p_1$ and $p_2$ are invariants of~$\loopie$, and thus so is any $p \in I = \ideal{p_1, p_2}$.
\end{example}

\begin{remark}
	The approach of~\cref{move-to-bin} relies on the
        existence of an appropriate change of coordinates that moves
        ideal~$I$ to a pure difference ideal~$J$ and further, that a 
        change-of-basis matrix~$S$ is given explicitly.
        The problem of determining the existence of such~\(S\) appears to be a challenging task. 

A similar challenge is addressed by Katthän et al.~\cite{katthaen2019binomial}. 
Their algorithmic approach~\cite[Algorithm 4.5]{katthaen2019binomial} can generate all matrices~$S \in \GL_d(\KK)$ that move an ideal~$I$ to a \emph{unital} ideal~$J$.
Recall that unital ideals are those generated by pure difference binomials and monomials, and hence pure difference ideals are a strict subclass. 
A straightforward approach to transform an ideal to a pure difference ideal is to apply the procedure of~\cite{katthaen2019binomial} and to manually check whether any of the transformed ideals output has a basis without monomials. 
If~$J$ is such an ideal, then \cref{move-to-bin} can be employed to
synthesise a linear loop for the initial ideal~$I$.%
\end{remark}
\section{Further results} \label{sec:further}

In this section, we gather together corollaries of our synthesis procedure from \cref{sec:binomial}.

\subsection{Existence of Non-Trivial Linear Loops} 
\cref{infloopexist,inftoricideal} next show that, subject to certain
restrictions on the input ideal, there are always non-trivial loops
that witness a given input as an invariant.%
\begin{theorem}\label{infloopexist}
	Let $I$ be an ideal generated by at most~$d-1$ pure difference binomials in~$\KK[x_1, \dots, x_d]$. There exists a non-trivial linear loop~$\loopie$ such that any polynomial~$p \in I$ is an invariant of~$\loopie$.
\end{theorem}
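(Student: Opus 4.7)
The plan is to invoke the synthesis procedure from \cref{sec:binomial} (specifically \cref{worrell}) to obtain a linear loop $\loopie$ with diagonal update matrix $M = \diag{\lambda_1, \ldots, \lambda_d}$ and initial vector $(1, \ldots, 1)$ that realises every polynomial in $I$ as an invariant, and then to verify that the constructed loop is non-trivial under the hypothesis $k \leq d-1$. Thus the existence and invariance conditions come for free from the machinery already established; the only thing to check is that the orbit is infinite.

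Recall from the proof of \cref{toric-synt} that the diagonal entries are $\lambda_j = p_1^{a_{1j}} \cdots p_s^{a_{sj}}$, where $p_1, \ldots, p_s$ are distinct primes and the $a_{ij}$ are the entries of the integer matrix~$A$ with $\ker A = \Sat(L)$. Starting from $(1,\ldots,1)$, the orbit is $\{(\lambda_1^n, \ldots, \lambda_d^n) : n \in \NN\}$. Each $\lambda_j$ is a positive rational, so the orbit is finite iff every $\lambda_j$ equals $1$; and by unique factorisation into distinct primes, $\lambda_j = 1$ precisely when the $j$-th column $\bm{a}_j$ of $A$ is the zero vector. Consequently, non-triviality of $\loopie$ is equivalent to $A \neq \bm{0}$.

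It remains to observe that the hypothesis $k \leq d-1$ forces $A \neq \bm{0}$. The lattice $L$ is spanned by the $k$ exponent vectors of the generators $p_1, \ldots, p_k$, so $r := \rank L \leq k \leq d-1 < d$. By \cref{constructA}, the matrix $A$ has $d - r \geq 1$ rows which form a basis of the non-trivial subspace $V^\perp \subseteq \QQ^d$, where $V$ denotes the $\QQ$-span of $L$; since basis vectors are non-zero, $A$ itself is non-zero. Hence some column $\bm{a}_j$ is non-zero, the corresponding $\lambda_j \neq 1$ yields pairwise distinct powers $\lambda_j^n$ for $n \in \NN$, and so the orbit of $\loopie$ is infinite.

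There is no real obstacle in this argument: once one unpacks the construction from \cref{sec:binomial}, the proof reduces to a book-keeping chain linking ranks of sublattices of $\ZZ^d$, non-vanishing of columns of $A$, and infiniteness of the orbit, with the bound $k \leq d-1$ entering in exactly one step to ensure $\rank L < d$.
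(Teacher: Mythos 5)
Your proposal is correct and follows essentially the same route as the paper's own proof: both reduce the claim to non-triviality of the loop produced by \cref{worrell}, observe that the diagonal entries $\lambda_j$ are positive rationals so the orbit is finite only if all $\lambda_j=1$, and use $k\leq d-1$ to force $\rank L<d$ and hence a non-zero matrix $A$ (equivalently, some non-zero column and some $\lambda_j\neq 1$). Your phrasing via unique factorisation and distinct powers of a positive rational $\neq 1$ is a slightly more direct version of the paper's roots-of-unity argument, but the substance is identical.
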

\begin{proof}
	In light of \cref{worrell}, it suffices to show that~\(\loopie\) generated by the procedure described in~\cref{sec:binomial} is non-trivial.
	
	Recall that a linear loop \(\loopie\) is trivial if the input vector has finite orbit.
	For such loops, the variable vector~$(x_1, \dots, x_d)$ is attained at two different iterations of~$\loopie$.
	Since, by construction, the update matrix \(M:= \diag{\lambda_1,\ldots, \lambda_d}\), the loop \(\loopie\) is trivial if and only if $\lambda_1^k = \dots = \lambda_d^k = 1$ for some~$k \geq 0$. 
	This implies that all diagonal entries of the loop update matrix~$M$ are roots of unity. However, by construction, the entries $\lambda_1, \dots, \lambda_d$ are positive rational numbers (see proof of \autoref{toric-synt}). 
	Therefore, $\lambda_1 = \dots = \lambda_d = 1$ is a necessary condition for~$\loopie$ to be trivial.
	
	From the proof of \cref{constructA}, the matrix~$A$ has rank~$0$ if and only if the associated lattice $L$ is spanned by $r=d$ linearly independent vectors. 
	By assumption, $L$ has rank at most~$d-1$ because it is generated by at most~$d-1$ exponent vectors. Thus $M \neq \diag{1, \dots, 1}$ and so \(\loopie\) is non-trivial.
\end{proof}
The next theorem is a specialisation of~\cref{infloopexist} and
follows from~Step~3 of the synthesis  procedure in~\cref{sec:binomial}.
\begin{theorem} \label{inftoricideal}
	Let $I_T \neq \ideal{x_1 - 1, \dots, x_d-1}$ be a toric ideal of $\KK[x_1, \dots, x_d]$. There exists a non-trivial linear loop~$\loopie$ such that $I_T$ is the invariant ideal of~$\loopie$.
\end{theorem}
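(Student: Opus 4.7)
The plan is to leverage the characterization of toric ideals as prime (hence saturated) lattice ideals and then reuse the machinery from \cref{sec:binomial} together with the non-triviality argument from \cref{infloopexist}.

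First, I would invoke the fact recorded in \cref{sec:preliminaries} that toric ideals are precisely the prime lattice ideals \cite[Proposition~1.1.11]{cox2011toric}. Concretely, $I_T = I_L$ for a uniquely determined saturated sublattice $L \subseteq \ZZ^d$, and $I_T$ admits a finite generating set of pure difference binomials whose exponent vectors span some sublattice $L' \subseteq L$ with $\Sat(L') = L$. This reduces the statement to: given such a saturated $L$ with $I_L \neq \ideal{x_1-1,\dots,x_d-1}$, the loop synthesised by the procedure of \cref{sec:binomial} is non-trivial and has invariant ideal $I_L$.

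Second, I would translate the excluded case into a lattice condition. The key observation is that a saturated sublattice of $\ZZ^d$ of full rank $d$ must coincide with $\ZZ^d$: any $\bm{v}\in\ZZ^d$ lies in the $\QQ$-span of $L$, so some nonzero integer multiple of $\bm{v}$ lies in $L$, and saturation then forces $\bm{v}\in L$. A short calculation shows that $I_{\ZZ^d} = \ideal{x_1-1,\dots,x_d-1}$ (since $\bm{e}_i - \bm{0}\in\ZZ^d$ yields $x_i-1\in I_{\ZZ^d}$, and any generator $\bm{x}^{\bm{u}}-\bm{x}^{\bm{v}}$ of $I_{\ZZ^d}$ factors through these by the standard telescoping identity). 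The hypothesis $I_T\neq\ideal{x_1-1,\dots,x_d-1}$ therefore forces $L\neq\ZZ^d$, and combined with saturation yields $\rank(L)\le d-1$.

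Third, I would apply the synthesis procedure: by \cref{toric-synt} (with input any pure difference generating set of $I_T$), we obtain a linear loop $\loopie$ with initial vector $(1,\dots,1)$ and diagonal update matrix $M=\diag{\lambda_1,\dots,\lambda_d}$ such that the invariant ideal equals $I_{\Sat(L')} = I_L = I_T$. For non-triviality, I would repeat the argument from the proof of \cref{infloopexist}: the matrix $A$ with $\ker A = L$ supplied by \cref{constructA} is nonzero precisely when $\rank(L)\le d-1$, so at least one column $\bm{a}_j$ of $A$ is nonzero; by the unique factorisation of the distinct primes $p_1,\dots,p_s$, the corresponding $\lambda_j = p_1^{a_{1j}}\cdots p_s^{a_{sj}}$ is a positive rational different from $1$, hence of infinite multiplicative order, so the orbit $\{M^n(1,\dots,1):n\in\NN\}$ is infinite and $\loopie$ is non-trivial.

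The only mildly delicate step is the second one, establishing the equality $I_{\ZZ^d} = \ideal{x_1-1,\dots,x_d-1}$ and the observation that a saturated full-rank sublattice of $\ZZ^d$ is $\ZZ^d$ itself; both are short but indispensable, since without them one cannot rule out the degenerate case that the synthesis of \cref{sec:binomial} returns the constant loop $M = \diag{1,\dots,1}$.
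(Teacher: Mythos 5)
Your proposal is correct and follows essentially the same route as the paper: identify $I_T$ with the lattice ideal of a saturated lattice, invoke \cref{toric-synt} to obtain a loop with invariant ideal $I_T$, and reuse the non-triviality argument of \cref{infloopexist} after showing the excluded case corresponds exactly to $\Sat(L)=\ZZ^d$. You merely spell out in more detail the equivalence $\Sat(L)=\ZZ^d \Leftrightarrow I_T=\ideal{x_1-1,\dots,x_d-1}$, which the paper asserts without proof.
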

\begin{proof}
	We recall that toric ideals are precisely the lattice ideals of saturated lattices~$\Sat(L) = \ker{A}$.
	Due to~\cref{toric-synt}, a loop~$\loopie$ with invariant ideal~$I_{\Sat(L)}$ can be generated for an arbitrary saturated lattice~$\Sat(L) = \ker{A}$. %
	Observe that from the proof of~\cref{infloopexist}, $\loopie$ is trivial if and only if $\Sat(L) = \ZZ^d$. This, in turn, is equivalent to $I_T = \ideal{\bm{x}^{\bm{\alpha}} - \bm{x}^{\bm{\beta}}  : \bm{\alpha}, \bm{\beta} \in \NN^d} = \ideal{x_1 - 1, \dots, x_d-1}$.
\end{proof}
\hl{Given polynomials $p_1, \dots, p_k$, a necessary condition for a non-trivial loop 
to simultaneously satisfy $p_i = 0$ for all $i =1, \dots, k$, 
is the existence of infinitely many rational solutions to a system of equations 
\(p_1= 0 \wedge \dots \wedge p_k= 0.\)
When~$p_1$, \dots, $p_k$ are pure difference binomials with~$k \leq d-1$, 
the system indeed has infinitely many solutions--and this is utilised in~\cref{infloopexist}. 
However, determining whether an arbitrary system of polynomial equations 
has infinitely many rational solutions is a highly non-trivial problem, 
as witnessed by the example below.
Note that determining whether an equation has infinitely many integer solutions is undecidable~\cite{davis1972numsol}.}
\begin{example}\label{cubessum1}Here, $d=2$ and $k=1$.
	a) The equation $x^3+y^3=1$ has precisely two rational solutions: $(x_1,y_1) = (1,0)$ and $(x_2,y_2) = (0,1)$. 
	b) For comparison, the equation $x^3+y^3 = 9$ has infinitely many rational solutions~\cite[Chapter 5.2]{silverman2015elliptic}.
\end{example}

\subsection{Modifications to the Synthesis Process}
	A higher objective than that set in~\cref{sec:binomial} is as follows:
	given a polynomial ideal \(I\), construct a linear loop whose invariant ideal is precisely \(I\).
	\hl{The invariant ideals are radical, and so, for this question, one may assume that the given ideal~$I$ is radical.	
	Equivalently, the objective is to construct a linear loop 
	whose Zariski closure is the variety~$V(I)$.}
	Our synthesis procedure does not generally achieve this goal because each inclusion in the chain of ideals
	\begin{equation}\label{chain} I \subseteq J \subseteq I_{L} \subseteq I_{\Sat(L)}\end{equation} 
	considered in~\cref{sec:binomial} is, in general, strict. 
	Here $J$ stands for the ideal generated by the canonical binomials, as in e.g.~\cref{IandJsat}.

\begin{remark} \label{rem:lattice}
	On the one hand, $\ker{A}$ of any integer $s\times d$-matrix~$A$ is a sublattice of~$\ZZ^d$.
	On the other hand, the converse statement is not true since there are lattices that cannot be represented as matrix kernels.
	For example, $L:= \ZZ(2, -2) \subset \ZZ^2$ is not saturated and hence not a kernel of a matrix: $(2, -2) \in L$ but $(1, -1) \not\in L$.
\end{remark}

By \autoref{rem:lattice}, we generally have $I_L \subsetneq I_{\Sat(L)}$. 
An examination of \cref{runex2} (see also~\cref{tblofex}), 
shows how most of the inclusions in~\eqref{chain} are strict; 
in this example, only the equality $J = I_L$ holds.
Nevertheless, its (general) violation is witnessed elsewhere in the literature (cf.~the discussion that follows \cite[Corollary 3.21]{herzog2018gtm}).

In special cases, we can simplify the procedure of~\cref{sec:binomial} by observing (introduced) redundancy. 
In this direction, we provide a %
list of sufficient conditions for equalities to hold in~\eqref{chain}.%
\begin{proposition}\label{shortcuts}Let $I \subseteq \KK[x_1, \dots, x_d]$ be a pure difference ideal.
	\begin{enumerate} 
		\item Suppose that $I = \ideal{p}$ is a principal ideal for which $p$ is a canonical pure difference binomial. 
		Then $I = I_L$. 
		If, in addition, $p$ is irreducible, then~$I = I_{\Sat(L)}$ holds.
		\item Suppose that $I$ is generated by pure difference binomials and at least one of the generators $p_1, \dots, p_k$ has a positive exponent vector~$\bm{v}_i \in \ZZ_{>0}^d$. Then, $J = I_L$.
	\end{enumerate}
\end{proposition}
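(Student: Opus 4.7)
The overall plan is to verify the required equalities in the chain~\eqref{chain} under each hypothesis by exploiting the lattice-theoretic description of $I_L$, together with the saturation identity $\Col{I}=I_L$ from~\cref{ItoIL}. Throughout, I would write $p = \bm{x}^{\bm{{v_+}}}-\bm{x}^{\bm{{v_-}}}$ and use the key fact that canonicality of a pure difference binomial forces $\bm{{v_+}}$ and $\bm{{v_-}}$ to have disjoint supports in $\NN^d$.

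For part~(1), the inclusion $\ideal{p} \subseteq I_L$ is immediate. For the reverse, every pure difference binomial generator of $I_L$ has the form $\bm{x}^{\bm{\alpha}}-\bm{x}^{\bm{\beta}}$ with $\bm{\alpha}-\bm{\beta} = n\bm{v}$ for some $n \in \ZZ$, where $\bm{v} = \bm{{v_+}} - \bm{{v_-}}$ spans $L$; by symmetry one may take $n \geq 0$. Using disjoint supports, I would verify coordinatewise that $\bm{\gamma} := \bm{\alpha} - n\bm{{v_+}} = \bm{\beta} - n\bm{{v_-}}$ lies in $\NN^d$, so that $\bm{x}^{\bm{\alpha}}-\bm{x}^{\bm{\beta}} = \bm{x}^{\bm{\gamma}}\bigl(\bm{x}^{n\bm{{v_+}}}-\bm{x}^{n\bm{{v_-}}}\bigr)$; applying the elementary factorisation $a^n-b^n = (a-b)\sum_{k=0}^{n-1}a^{n-1-k}b^k$ with $a = \bm{x}^{\bm{{v_+}}}$ and $b = \bm{x}^{\bm{{v_-}}}$ then places this element inside $\ideal{p}$. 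For the refinement, if $p$ is irreducible then $I = \ideal{p}$ is prime, so $I_L$ is itself a prime lattice ideal; the minimal decomposition of~\cref{ILtoISatL} must then collapse to its single prime factor $I_{\Sat(L)}$, giving $I = I_L = I_{\Sat(L)}$.

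For part~(2), the decisive observation is that if some generator $p_i$ has strictly positive exponent vector $\bm{v}_i \in \ZZ_{>0}^d$, then its associated canonical binomial is exactly $\bm{x}^{\bm{v}_i}-1$, which lies in $J$ by construction. Consequently, in $\KK[\bm{x}]/J$ each variable $x_j$ is a unit, with inverse $x_j^{v_{i,j}-1}\prod_{\ell\neq j}x_\ell^{v_{i,\ell}}$, so every monomial in $x_1,\ldots,x_d$ is a non-zero-divisor modulo $J$. This yields $\Col{J} = J$, and combining with~\cref{IandJsat,ItoIL} gives $J = \Col{J} = \Col{I} = I_L$. The only technically delicate point is the coordinatewise non-negativity check for $\bm{\gamma}$ in part~(1): canonicality is precisely what makes it work, and the factorisation argument genuinely fails without it, as illustrated by the strict inclusion $J \subsetneq I_L$ observed for the non-canonical binomial of~\cref{runex2}.
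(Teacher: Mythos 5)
Your proof is correct and follows essentially the same route as the paper's: part~(1) establishes $I_L\subseteq\ideal{p}$ by showing every generator $\bm{x}^{\bm{\alpha}}-\bm{x}^{\bm{\beta}}$ with $\bm{\alpha}-\bm{\beta}=n\bm{v}$ is divisible by $p$ via the disjoint-support property of canonical binomials and the factorisation of $a^n-b^n$, then uses primeness of $\ideal{p}$ to collapse to $I_{\Sat(L)}$; part~(2) uses the positive exponent vector to make every variable a unit modulo $J$, forcing $\Col{J}=J$ and hence $J=I_L$ via \cref{IandJsat,ItoIL}. If anything, your coordinatewise check that the cofactor exponent $\bm{\gamma}$ lies in $\NN^d$ makes explicit a step the paper glosses over when it asserts the generators of $I_L$ "take the form" $\bm{x}^{n\bm{\alpha}}-\bm{x}^{n\bm{\beta}}$.
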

\begin{proof}
	Regarding (1), we consider the principal ideal of~$p = \bm{x}^{\bm{\alpha}} - \bm{x}^{\bm{\beta}}$ with $\gcd(\bm{x}^{\bm{\alpha}},\bm{x}^{\bm{\beta}}) = 1$.
	Let~$I_L$ be the lattice ideal of~$L:= \ZZ \cdot (\bm{\alpha} - \bm{\beta})$. %
	This lattice ideal is generated by pure difference binomials.  
	Each such generator takes the form $\bm{x}^{n\bm{\alpha}} - \bm{x}^{n\bm{\beta}}$ for some positive integer~$n$, or $\bm{x}^{-m\bm{\beta}} - \bm{x}^{-m\bm{\alpha}}$ for some negative integer~$m$. 
	Polynomials of this form are divisible by $\bm{x}^{\bm{\alpha}} - \bm{x}^{\bm{\beta}}$. 
	It quickly follows that $I_L \subseteq \ideal{p}$ and hence~$I_L = I$. 
	If, in addition, $p$ is irreducible, 
	then $p$ is prime (\(\KK[x_1, \dots, x_d]\) is a unique factorisation domain). 
	Thus $I_L = I =\ideal{p}$ is prime and so~$L$ is saturated. Therefore $I = I_L = I_{\Sat(L)}$.
	
	Assertion (2), in turn, is a modification of~\cite[Lemma 12.4]{sturmfels1996gbcp}. Without loss of generality, let $J$ be an ideal generated by canonical pure difference binomials~$q_1, \dots, q_k$, where $q_1$'s exponent vector~$\bm{v}$ is positive. Then, $q_1 = \bm{x}^{\bm{v}}-1$ and every variable $x_i$, $1 \leq i \leq d$, has an inverse in the factor ring~$\KK[\bm{x}]/J$.
	Hence there exists $g \in \KK[\bm{x}]$ such that
	\( \textstyle
		p:= g \cdot \left(\prod_{i=1}^{d}x_i\right) - 1 \in J
	\).
	Let~$f \in \Col{J}$; that is, assume there exists~$n \geq 0$ such that $\left(\prod_{i=1}^{d}x_i \right)^n\cdot f \in J$.
	As a consequence, $g^n\cdot \left(\prod_{i=1}^{d}x_i \right)^n\cdot f \in J$.
	We rewrite the latter in the following form:
	\(\textstyle \left(g^n\left(\prod_{i=1}^{d}x_i \right)^n - 1 \right)\cdot f + f \in J
	\).
	Note that the expression wrapped by the outer parentheses is divisible by $p$. 
	It follows immediately that $f \in J$. Thus $J = \Col{J}$ and, by way of~\cref{ItoIL}, $I_L = J$.
\end{proof}
Note that~\cref{cor:canonical} follows from~\cref{shortcuts}~(i): 
an irreducible pure difference binomial is clearly canonical. 
A loop synthesised for~$I_{\Sat(L)}$ then has~$I$ as its invariant ideal by~\cref{toric-synt}.

\section{Discussion and Conclusions} \label{sec:conclusion}

\noindent\textbf{Related Works.} 
On the one hand,
deriving polynomial invariants for the class of loops with (non-linear) polynomial arithmetic in their update assignments is, in general, undecidable~\cite{Ouaknine20}.
On the other hand, for restricted classes of loops there are efficient procedures for synthesising invariants.
Indeed, the restricted classes of polynomial arithmetic in so-called \emph{solvable loops}~\cite{Rodriguez04}---loops with (blocks of) affine assignments, admit such procedures~\cite{Kovacs08,Oliveira16,Humenberger18,Kincaid18}.

Previous works synthesising linear loops from non-linear polynomial invariants include~\cite{humenberger2022journal}. 
Their approach is also based on algebraic reasoning about the C-finite
sequences generated by linear loops. Yet, the synthesis problem is
translated into a constraint solving task in non-linear arithmetic, by
relying on loop templates and using a conflict resolution procedure based on cylindrical
algebraic decomposition/Gr\"obner basis
computation~\cite{DBLP:journals/cca/JovanovicM12}. 
Unlike~\cite{humenberger2022journal}, 
our synthesis procedure restricts to linear algebraic reasoning. %
The algorithm of~\cite{humenberger2022journal} is relative complete as
it generates all linear loops satisfying a given invariant of a
bounded degree. 
As such, and in contrast to our results from \cref{infloopexist,inftoricideal}, 
the work of~\cite{humenberger2022journal} gives no sufficient conditions for the existence of at least one non-trivial loop satisfying the invariant. 

The approach in~\cite{galuppi2021toric} studies cyclic semigroups and 
sheds light on their geometry. 
In particular, irreducible components of 
Zariski closures for cyclic semigroups 
are shown to be isomorphic to toric varieties. 
Using the terminology in this note, their observation that
toric ideals are precisely the invariant ideals of some linear loops
is the key motivation for our synthesis procedure. 

\noindent\textbf{Conclusions and Future Directions.}
Using machinery from algebraic geometry, we identify classes of 
polynomial invariants 
for which synthesis becomes decidable.
Our synthesis procedure can
produce infinitely many different loops that, by construction, all
have the same invariant. 
For a given input, the number of variables $s$ of the procedurally generated
 output loops is equal to the number~$d$ of indeterminates in the input polynomials. 
Observe that linear loops with~$s > d$ variables can still satisfy invariants 
from~$\KK[x_1, \dots, x_d]$ on a subset of their variables. 
The challenge of deriving an upper bound on the number of program variables~$s$ for the invariant, also raised in~\cite[Section 4.2]{humenberger2022journal}, remains open. 

\hl{We aim to analyse the bit complexity of our procedure in the future. Note that in Step 3, prime numbers are raised to the powers that are, in general, exponential in the number of variables~$d$. Therefore, the generated update matrix might have entries with exponentially many (in~$D$) bits, cf.\ the worst-case double exponential running time of Gr\"obner basis	computation for the invariant ideal.}

We intent to extend our synthesis approach to synthesise loops 
whose invariant ideals are non-prime lattice ideals. 
One challenge to be addressed is the classification of Zariski closures of the orbits~$\OO$ of linear loops, cf.~\cite{galuppi2021toric}.
In this direction, it is of interest whether loops 
with non-diagonalisable matrices 
have invariant ideals
structurally different from those of loops synthesised in~\cref{sec:binomial,sec:to-bin}.

We conclude this paper by highlighting the following limitation of our
approach, which we intend to address in future work.
The construction in Step~1, that introduces the lattice ideal $I_L$,
entails that we can only synthesise loops whose invariant ideals are lattice ideals.
Further, \cref{ItoIL} implies that $V(I_L)$ is the Zariski closure of $V(I) \setminus V(\prod_{i=1}^{d}x_i)$, cf.~\cite[Chapter 4.4, Theorem 10]{cox2015ideals}. 
Our procedure thus only synthesises loops whose orbits lie outside of the coordinate hyperplanes. %
	\begin{example}\label{fourlines}
		We revisit the synthesis problem for an ideal~$I = \ideal{x^3y-xy^3}$ of~\cref{runex2}. 
		The invariant ideal of a linear loop~$\loopie'$ with update matrix%
		\[M:=\begin{pmatrix*}
			1 & -1\\
			1 & 1
		\end{pmatrix*}\]
		and initial vector~$\bm{v}_\text{init} = (1, 0)$ is precisely the ideal~$I$. 
	\end{example}
	The loop~$\loopie'$ in the example above stands in contrast to the synthesised loop~$\loopie$ in~\cref{runex2} 
	because~$V(I)$ is the strongest algebraic invariant of~$\loopie'$.
	In contrast, the strongest algebraic invariant of~$\loopie$ is $V(x-y)$.

\printbibliography

@article{DBLP:journals/jsc/Buchberger06a,
  author    = {Bruno Buchberger},
  title     = {{B}runo {B}uchberger's {PhD} thesis 1965: An algorithm for finding the basis
               elements of the residue class ring of a zero dimensional polynomial
               ideal},
  journal   = {J. Symb. Comput.},
  volume    = {41},
  number    = {3-4},
  pages     = {475--511},
  year      = {2006},
  optdoi       = {10.1016/j.jsc.2005.09.007},
 }

@article{DBLP:journals/cca/JovanovicM12,
  author    = {Dejan Jovanovic and
               Leonardo de Moura},
  title     = {Solving non-linear arithmetic},
  journal   = {{ACM} Commun. Comput. Algebra},
  volume    = {46},
  number    = {3/4},
  pages     = {104--105},
  year      = {2012},
  optdoi       = {10.1145/2429135.2429155},
 }

@book{DBLP:series/tmsc/KauersP11,
  author    = {Manuel Kauers and
               Peter Paule},
  title     = {The Concrete Tetrahedron - Symbolic Sums, Recurrence Equations, Generating
               Functions, Asymptotic Estimates},
  series    = {Texts {\&} Monographs in Symbolic Computation},
  publisher = {Springer},
  year      = {2011},
  optdoi       = {10.1007/978-3-7091-0445-3},
 }

@article{SRed01, 
  author    = {Keith D. Cooper and
               L. Taylor Simpson and
               Christopher A. Vick},
  title     = {{Operator Strength Reduction}},
  journal   = {{ACM} Trans. Program. Lang. Syst.},
  volume    = {23},
  number    = {5},
  pages     = {603--625},
  year      = {2001}
  }

@article{galuppi2021toric,
	author = {Francesco Galuppi and Mima Stanojkovski},
	title = {Toric varieties from cyclic matrix semigroups},
	year = {2021},
	publisher = {EUT Edizioni Università di Trieste},
	JOURNAL = {Rend. Istit. Mat. Univ. Trieste},
 	FJOURNAL = {Rendiconti dell'Istituto di Matematica dell'Universit\`a di
              Trieste. An International Journal of Mathematics},
	abstract = {We present and expand some existing results on the Zariski closure of cyclic groups and semigroups of matrices. We show that, with the exclusion of isolated points, their irreducible components are toric varieties. Additionally, we demonstrate how every toric variety can be realized as the Zariski closure of a cyclic matrix group. Our paper includes a number of explicit examples and a note on existing computational results.},
	keywords = {Toric varieties; matrix semigroups; },
	doi = {10.13137/2464-8728/33099},	
    issn = {0049-4704},    
}

@article{humenberger2022journal,
	author = {Humenberger, Andreas and Amrollahi, Daneshvar and Bj\o{}rner, Nikolaj and Kov\'{a}cs, Laura},
	title = {Algebra-Based Reasoning for Loop Synthesis},
	year = {2022},
	issue_date = {March 2022},
	publisher = {Association for Computing Machinery},
	address = {New York, NY, USA},
	volume = {34},
	number = {1},
	issn = {0934-5043},
	url = {https://doi.org/10.1145/3527458},
	doi = {10.1145/3527458},
	abstract = {Provably correct software is one of the key challenges of our software-driven society. Program synthesis—the task of constructing a program satisfying a given specification—is one strategy for achieving this. The result of this task is then a program that is correct by design. As in the domain of program verification, handling loops is one of the main ingredients to a successful synthesis procedure.We present an algorithm for synthesizing loops satisfying a given polynomial loop invariant. The class of loops we are considering can be modeled by a system of algebraic recurrence equations with constant coefficients, thus encoding program loops with affine operations among program variables. We turn the task of loop synthesis into a polynomial constraint problem by precisely characterizing the set of all loops satisfying the given invariant. We prove soundness of our approach, as well as its completeness with respect to an a priori fixed upper bound on the number of program variables. Our work has applications toward synthesizing loops satisfying a given polynomial loop invariant—program verification—as well as generating number sequences from algebraic relations. To understand viability of the methodology and heuristics for synthesizing loops, we implement and evaluate the method using the Absynth tool.},
	journal = {Form. Asp. Comput.},
	month = {jul},
	articleno = {4},
	numpages = {31},
	keywords = {symbolic computation, Program synthesis, SMT solving, loop invariants, linear recurrences}
}

@article{kauers2008algrel,
	title = {Computing the algebraic relations of C-finite sequences and multisequences},
	journal = {Journal of Symbolic Computation},
	volume = {43},
	number = {11},
	pages = {787-803},
	year = {2008},
	issn = {0747-7171},
	doi = {https://doi.org/10.1016/j.jsc.2008.03.002},
	url = {https://www.sciencedirect.com/science/article/pii/S074771710800045X},
	author = {Manuel Kauers and Burkhard Zimmermann},
	keywords = {C-finite sequences, Algebraic relations, Recurrence equations},
	abstract = {We present an algorithm for computing generators for the ideal of algebraic relations among sequences which are given by homogeneous linear recurrence equations with constant coefficients. Knowing these generators makes it possible to use Gröbner basis methods for carrying out certain basic operations in the ring of such sequences effectively. In particular, one can answer the question whether a given sequence can be represented in terms of other given sequences.}
}

@book {cox2011toric,
	AUTHOR = {Cox, David A. and Little, John B. and Schenck, Henry K.},
	TITLE = {Toric varieties},
	SERIES = {Graduate Studies in Mathematics},
	VOLUME = {124},
	PUBLISHER = {American Mathematical Society, Providence, RI},
	YEAR = {2011},
	PAGES = {xxiv+841},
	ISBN = {978-0-8218-4819-7},
	MRCLASS = {14M25 (05A15 05E45 52B12)},
	MRNUMBER = {2810322},
	MRREVIEWER = {Ivan Arzhantsev},
	DOI = {10.1090/gsm/124},
	URL = {https://doi.org/10.1090/gsm/124},
}

@book {cox2015ideals,
	AUTHOR = {Cox, David A. and Little, John B. and O'Shea, Donal},
	TITLE = {Ideals, varieties, and algorithms},
	SERIES = {Undergraduate Texts in Mathematics},
	EDITION = {Fourth},
	NOTE = {An introduction to computational algebraic geometry and
	commutative algebra},
	PUBLISHER = {Springer, Cham},
	YEAR = {2015},
	PAGES = {xvi+646},
	ISBN = {978-3-319-16720-6; 978-3-319-16721-3},
	MRCLASS = {13P10 (13-01 14-01 14Qxx 68W30)},
	MRNUMBER = {3330490},
	DOI = {10.1007/978-3-319-16721-3},
	URL = {https://doi.org/10.1007/978-3-319-16721-3},
}

@book {herzog2018gtm,
    AUTHOR = {Herzog, J\"{u}rgen and Hibi, Takayuki and Ohsugi, Hidefumi},
     TITLE = {Binomial ideals},
    SERIES = {Graduate Texts in Mathematics},
    VOLUME = {279},
 PUBLISHER = {Springer, Cham},
      YEAR = {2018},
     PAGES = {xix+321},
      ISBN = {978-3-319-95347-2; 978-3-319-95349-6},
       DOI = {10.1007/978-3-319-95349-6},
       URL = {https://doi.org/10.1007/978-3-319-95349-6},
}

@book{sturmfels1996gbcp,
    AUTHOR = {Sturmfels, Bernd},
     TITLE = {Gr\"{o}bner bases and convex polytopes},
    SERIES = {University Lecture Series},
    VOLUME = {8},
 PUBLISHER = {American Mathematical Society, Providence, RI},
      YEAR = {1996},
     PAGES = {xii+162},
      ISBN = {0-8218-0487-1},
   MRCLASS = {13P10 (14M25 52B20)},
  MRNUMBER = {1363949},
MRREVIEWER = {P. Schenzel},
       DOI = {10.1090/ulect/008},
       URL = {https://doi.org/10.1090/ulect/008},
}

@article {eisenbud1996binomial,
    AUTHOR = {Eisenbud, David and Sturmfels, Bernd},
     TITLE = {Binomial ideals},
   JOURNAL = {Duke Math. J.},
  FJOURNAL = {Duke Mathematical Journal},
    VOLUME = {84},
      YEAR = {1996},
    NUMBER = {1},
     PAGES = {1--45},
      ISSN = {0012-7094},
   MRCLASS = {13P10 (13A30 14M25)},
  MRNUMBER = {1394747},
MRREVIEWER = {P. Schenzel},
       DOI = {10.1215/S0012-7094-96-08401-X},
       URL = {https://doi.org/10.1215/S0012-7094-96-08401-X},
}

@inproceedings{Humenberger18,
  author    = {Andreas Humenberger and
               Maximilian Jaroschek and
               Laura Kov{\'{a}}cs},
  title     = {{Invariant Generation for Multi-Path Loops with Polynomial Assignments}},
  booktitle = {Proc. of VMCAI},
  pages     = {226--246},
  year      = {2018},
  optdoi       = {10.1007/978-3-319-73721-8\_11},
}

@inproceedings{Kincaid18,
  author    = {Zachary Kincaid and
               John Cyphert and
               Jason Breck and
               Thomas W. Reps},
  title     = {{Non-Linear Reasoning for Invariant Synthesis}},
  booktitle    = {Proc. of POPL},
  pages     = {54:1--54:33},
  year      = {2018},
  optdoi       = {10.1145/3158142},
}

@inproceedings{Kovacs08,
  author    = {Laura Kov{\'{a}}cs},
  title     = {{Reasoning Algebraically About P-Solvable Loops}},
  booktitle = {Proc. of TACAS},
  pages     = {249--264},
  year      = {2008},
  optdoi       = {10.1007/978-3-540-78800-3\_18},
}

@inproceedings{Rodriguez04,
author = {Rodr\'{i}guez-Carbonell, Enric and Kapur, Deepak},
title = {{Automatic Generation of Polynomial Loop Invariants: Algebraic Foundations}},
year = {2004},
optdoi = {10.1145/1005285.1005324},
booktitle = {Proc. of ISSAC},
pages = {266–273},
}

@inproceedings{Oliveira16,
  author    = {Steven de Oliveira and
               Saddek Bensalem and
               Virgile Prevosto},
  title     = {{Polynomial Invariants by Linear Algebra}},
  booktitle = {Proc. of ATVA},
  pages     = {479--494},
  year      = {2016},
  optdoi       = {10.1007/978-3-319-46520-3\_30},
}

@article{Ouaknine20, 
author= {E. Hrushovski and J. Ouaknine and A. Pouly and J. Worrell}, 
title = {{On Strongest Algebraic Program Invariants}}, 
journal = {J. of ACM}, 
year =2020, 
note = {To appear}
}

@article{katthaen2019binomial,
	author    = {Lukas Katth{\"{a}}n and
	Mateusz Michalek and
	Ezra Miller},
	title     = {When is a Polynomial Ideal Binomial After an Ambient Automorphism?},
	journal   = {Found. Comput. Math.},
	volume    = {19},
	number    = {6},
	pages     = {1363--1385},
	year      = {2019},
	url       = {https://doi.org/10.1007/s10208-018-9405-0},
	doi       = {10.1007/s10208-018-9405-0},
	timestamp = {Fri, 13 Mar 2020 14:35:36 +0100},
	biburl    = {https://dblp.org/rec/journals/focm/KatthanMM19.bib},
	bibsource = {dblp computer science bibliography, https://dblp.org}
}

@article{grigoriev2019biomodels,
	title={Efficiently and Effectively Recognizing Toricity of Steady State Varieties},
	author={Dima Grigoriev and Alexandru Iosif and Hamid Rahkooy and Thomas Sturm and Andreas Weber},
	journal={Mathematics in Computer Science},
	year={2019},
	volume={15},
	pages={199 - 232}
}

@article{ouaknine2015survey,
author = {Ouaknine, Jo\"{e}l and Worrell, James},
title = {On Linear Recurrence Sequences and Loop Termination},
year = {2015},
issue_date = {April 2015},
publisher = {Association for Computing Machinery},
address = {New York, NY, USA},
volume = {2},
number = {2},
url = {https://doi.org/10.1145/2766189.2766191},
doi = {10.1145/2766189.2766191},
journal = {ACM SIGLOG News},
month = {apr},
pages = {4–13},
numpages = {10}
}

@Book{everest2003recurrence,
  author     = {Everest, G. and van der Poorten, A. and Shparlinski, I. and Ward, T.},
  publisher  = {Amer. Math. Soc., Providence, RI},
  title      = {Recurrence sequences},
  year       = {2003},
  isbn       = {0-8218-3387-1},
  series     = {Math. Surveys Monogr.},
  volume     = {104},
  doi        = {10.1090/surv/104},
  mrclass    = {11B37 (11B85 11G05 11J71 11K45 11T23 37B15 94A60)},
  mrnumber   = {1990179},
  mrreviewer = {Yann Bugeaud},
  pages      = {xiv+318},
  url        = {https://doi.org/10.1090/surv/104},
}

@incollection {diaconis1998lattice,
	AUTHOR = {Diaconis, Persi and Eisenbud, David and Sturmfels, Bernd},
	TITLE = {Lattice walks and primary decomposition},
	BOOKTITLE = {Mathematical essays in honor of {G}ian-{C}arlo {R}ota
	({C}ambridge, {MA}, 1996)},
	SERIES = {Progr. Math.},
	VOLUME = {161},
	PAGES = {173--193},
	PUBLISHER = {Birkh\"{a}user Boston, Boston, MA},
	YEAR = {1998},
	MRCLASS = {13P10 (05A15 05E99)},
	MRNUMBER = {1627343},
	MRREVIEWER = {Heinrich Niederhausen},
}

@article{kahle2014positive,
    AUTHOR = {Kahle, Thomas and Rauh, Johannes and Sullivant, Seth},
     TITLE = {Positive margins and primary decomposition},
   JOURNAL = {J. Commut. Algebra},
  FJOURNAL = {Journal of Commutative Algebra},
    VOLUME = {6},
      YEAR = {2014},
    NUMBER = {2},
     PAGES = {173--208},
      ISSN = {1939-0807},
       DOI = {10.1216/JCA-2014-6-2-173},
       URL = {https://doi.org/10.1216/JCA-2014-6-2-173},
}

@InProceedings{kesh2009ideals,
author="Kesh, Deepanjan
and Mehta, Shashank K.",
editor="Dong, Yingfei
and Du, Ding-Zhu
and Ibarra, Oscar",
title="Generalized Reduction to Compute Toric Ideals",
booktitle="Algorithms and Computation",
year="2009",
publisher="Springer Berlin Heidelberg",
address="Berlin, Heidelberg",
pages="483--492",
isbn="978-3-642-10631-6"
}

@book{silverman2015elliptic,
	author = {Silverman, Joseph H. and Tate, John T.},
	title = {Rational Points on Elliptic Curves},
	year = {2015},
	isbn = {331918587X},
	publisher = {Springer Publishing Company, Incorporated},
	edition = {2nd},
	abstract = {The theory of elliptic curves involves a pleasing blend of algebra, geometry, analysis, and number theory. This volume stresses this interplay as it develops the basic theory, thereby providing an opportunity for advanced undergraduates to appreciate the unity of modern mathematics. At the same time, every effort has been made to use only methods and results commonly included in the undergraduate curriculum. This accessibility, the informal writing style, and a wealth of exercises make Rational Points on Elliptic Curves an ideal introduction for students at all levels who are interested in learning about Diophantine equations and arithmetic geometry. Most concretely, an elliptic curve is the set of zeroes of a cubic polynomial in two variables. If the polynomial has rational coefficients, then one can ask for a description of those zeroes whose coordinates are either integers or rational numbers. It is this number theoretic question that is the main subject of Rational Points on Elliptic Curves. Topics covered include the geometry and group structure of elliptic curves, the Nagell Lutz theorem describing points of finite order, the Mordell Weil theorem on the finite generation of the group of rational points, the Thue Siegel theorem on the finiteness of the set of integer points, theorems on counting points with coordinates in finite fields, Lenstra's elliptic curve factorization algorithm, and a discussion of complex multiplication and the Galois representations associated to torsion points. Additional topics new to the second edition include an introduction to elliptic curve cryptography and a brief discussion of the stunning proof of Fermat's Last Theorem by Wiles et al. via the use of elliptic curves.}
}

@article{davis1972numsol,
	ISSN = {00029939, 10886826},
	URL = {http://www.jstor.org/stable/2037646},
	abstract = {For any nontrivial set of cardinal numbers $\leqq \aleph_0$, it is shown that there is no algorithm for testing whether or not the number of positive integer solutions of a given polynomial Diophantine equation belongs to the set.},
	author = {Martin Davis},
	journal = {Proceedings of the American Mathematical Society},
	number = {2},
	pages = {552--554},
	publisher = {American Mathematical Society},
	title = {On the Number of Solutions of Diophantine Equations},
	urldate = {2023-05-15},
	volume = {35},
	year = {1972}
}

\end{document}